\newcommand{\f}{\frac}
\newcommand{\cd}{\cdot}
\newcommand{\sr}{\sqrt}
\newcommand{\lds}{\ldots}
\newcommand{\bs}{\backslash}
\newcommand{\be}{\begin{enumerate}}
\newcommand{\ee}{\end{enumerate}}
\newcommand{\im}{\item}
\newcommand{\bi}{\begin{itemize}}
\newcommand{\ei}{\end{itemize}}
\newcommand{\Sum}{\displaystyle\sum\limits}
\newcommand{\logn}{\log n}
\newcommand{\R}{\mathbb R}
\newcommand{\eps}{\epsilon}
\newcommand{\e}{\epsilon}
\newcommand{\de}{\delta}
\newcommand{\bt}{\beta}
\newcommand{\Om}{\Omega}
\newcommand{\el}{\ell}
\newcommand{\Th}{\Theta}
\newcommand{\Ra}{\Rightarrow}
\newcommand{\lf}{\lfloor}
\newcommand{\rf}{\rfloor}
\newcommand{\lc}{\lceil}
\newcommand{\rc}{\rceil}
\newcommand{\E}{\mathbb E}
\newcommand{\poly}{\textup{poly}}
\newcommand{\polylog}{\textup{polylog}}
\newcommand{\lp}{\left(}
\newcommand{\rp}{\right)}
\newcommand{\lmt}{\left[\begin{matrix}} 
\newcommand{\rmt}{\end{matrix}\right]}
\newcommand{\tO}{\tilde{O}}
\newcommand{\tOm}{\tilde{\Omega}}
\newcommand{\sqn}{\sqrt{n}}
\newcommand{\tODn}{\tO(\sqn + D)}
\newcommand{\mc}{\mathcal}
\newtheorem{defn}{Definition}
\title{Faster Distributed Shortest Path Approximations via Shortcuts}
\author{Bernhard Haeupler}{Carnegie Mellon University}{cs.cmu.edu/~haeupler}{}{Supported in part by NSF grants CCF-1527110, CCF-1618280 and NSF CAREER award CCF-1750808.}
\author{Jason Li}{Carnegie Mellon University}{cs.cmu.edu/~jmli}{}{}
\authorrunning{B. Haeupler and J. Li}
\titlerunning{Faster Distributed Shortest Path Approximations via Shortcuts}
\subjclass{G.2.2 Graph Theory - Graph Algorithms}
\keywords{Distributed Graph Algorithms, Shortest Path, Shortcuts}
\begin{document}

\date{}
\maketitle

\begin{abstract}
A long series of recent results and breakthroughs have led to faster and better distributed approximation algorithms for single source shortest paths (SSSP) and related problems in the CONGEST model. The runtime of all these algorithms, however, is $\tilde{\Omega}(\sqrt{n})$, regardless of the network topology\footnote{We use $\tilde{\ }$-notation to hide polylogarithmic factors in $n$, e.g., $\tilde{O}(f(n)) = O(f(n)\log^{O(1)} n)$.}, even on nice networks with a (poly)logarithmic network diameter $D$. While this is known to be necessary for some pathological networks, most topologies of interest are arguably not of this type.

\smallskip 

We give the first distributed approximation algorithms for shortest paths problems that adjust to the topology they are run on, thus achieving significantly faster running times on many topologies of interest. The running time of our algorithms depends on and is close to $Q$, where $Q$ is the quality of the best shortcut that \emph{exists} for the given topology. While $Q = \tilde{\Theta}(\sqrt{n} + D)$ for pathological worst-case topologies, many topologies of interest\footnote{For example, \cite{ghaffari2016distributed} and \cite{haeupler2016near} show that large classes of interesting network topologies, including planar networks, bounded genus topologies, and networks with polylogarithmic treewidth have shortcuts of quality $Q=\tilde{O}(D)$. A similar statment is likely to hold for any minor closed graph family\cite{minorspersonalcommunication}.} have $Q = \tilde{\Theta}(D)$, which results in near \emph{instance optimal} running times for our algorithm, given the trivial $\Omega(D)$ lower bound. 


\smallskip 

The problems we consider are as follows: 
\begin{itemize}
        \item an approximate shortest path tree and SSSP distances, 
        \item a polylogarithmic size distance label for every node such that from the labels of any two nodes alone one can determine their distance (approximately), and
        \item an (approximately) optimal flow for the transshipment problem. 
\end{itemize}

\smallskip 

Our algorithms have a tunable tradeoff between running time and approximation ratio. Our fastest algorithms have an arbitrarily good polynomial approximation guarantee and an essentially optimal $\tilde{O}(Q)$ running time. On the other end of the spectrum, we achieve polylogarithmic approximations in {$\tilde{O}(Q \cdot n^{\eps})$} rounds for any $\eps > 0$. It seems likely that eventually, our non-trivial approximation algorithms for the SSSP tree and transshipment problem can be bootstrapped to give fast $Q \cdot 2^{O(\sqrt{\log n\log \log n})}$ round $(1+\eps)$-approximation algorithms using a recent result by Becker et al.

%
%
        %
%
        %
%

\end{abstract}

\setcounter{page}{0}
\thispagestyle{empty}

\newpage

\section{Introduction}

This paper gives new distributed approximation algorithms for computing single source shortest path (SSSP) distances and various generalizations, such as computing a SSSP tree, distance labels, and a min-cost uncapacitated flow. 

\smallskip

In the last few years, CONGEST algorithms for shortest path problems have seen a tremendous amount of interest and progress~\cite{nanongkai2014distributed,henzinger16almost,becker2016near}. The main difference of the algorithms developed here, compared to those works, is that our algorithms achieve significantly faster running times for non-pathological network topologies by building on the recently developed~\cite{ghaffari2016distributed,haeupler2016low} low-congestion shortcut framework; for a detailed overview, see Appendix~\ref{sec:shortcutframework} of the full version on arXiv.

\smallskip

The low-congestion shortcut framework leads to faster algorithms for optimization problems with simple parallel divide and conquer style algorithms, such as the minimum spanning tree problem. However it initially seemed less applicable to shortest path problems, particularly because all previous approaches for CONGEST algorithms for these problems led to $\Omega(\sqrt{n})$ running times, for reasons that are independent of issues where shortcuts can help. Indeed, our approach for achieving non-trivial approximation ratios for shortest path problems deviates notably from these approaches, and uses different tools to obtain non-trivial approximation guarantees. 

\smallskip


This paper is organized as follows: We briefly summarizes the key technical concepts of the shortcut framework in Section~\ref{sec:shortcutsummary}; a more detailed treatment of the framework is given in Appendix~\ref{sec:shortcutframework} in the full version. In Section~\ref{sec:SSSPProblems}, we define the different problems we treat in this paper, and explain the difficulties in beating the $\tOm(\sqn + D)$ barrier for approximating shortest path distances. We state our results in Section~\ref{sec:results}, compare it to related works in Section~\ref{sec:related}, and devote the remaining paper to describing our algorithms and proving them correct.

\subsection{The Low-Congestion Shortcut Framework: A Brief Summary}\label{sec:shortcutsummary}

This section provides the key technical definitions and facts about the low-congestion shortcut framework. However, it does not attempt to explain the reasons, generality or importance behind the definitions given here. Appendix~\ref{sec:shortcutframework} of the full version gives a more detailed treatment, and we highly recommend to readers  not familiar with the low-congestion shortcut framework to read Appendix~\ref{sec:shortcutframework} first. 

\smallskip

The shortcut framework is built around a simple and basic communication problem, given in the next two definitions:

\begin{defn}[Valid Partitioning and Parts]
        For a graph $G=(V,E)$, we say that a collection of parts $S_1, S_2, \ldots \subset V$ is a \textbf{valid partition} if the parts are vertex disjoint and each induces a connected graph. 
\end{defn}

\begin{defn}[The Part-wise Communication Problem]\label{def:partwisecomprob}
        Let $G$ be a network with a valid partitioning $S_1,S_2,\lds$ and a value $x_v$ for every node $v \in V$. Suppose $\oplus$ is an associative and commutative function. The \textbf{partwise communication problem} asks for every $S_i$ and every $u \in S_i$ to compute the value $\displaystyle\bigoplus_{v\in S_i} x_v$.
\end{defn}

We remark that for convenience, the parts of a valid partition do not necessarily need to contain every vertex in $V$. Alternatively, it can be convenient to think of each node in $V \setminus \bigcup_i S_i$ as forming its own single-vertex part, thus making any valid partitioning a partitioning in the usual sense. 

\smallskip

The key findings of the shortcut framework can now be summarized as follows:

\begin{mdframed}[roundcorner=4pt, backgroundcolor=white]
The shortcut framework allows us to characterize how hard it is to solve the part-wise communication problem described in Definition~\ref{def:partwisecomprob} in the CONGEST model for any given topology $G$. For any network with topology $G$ this is captured by the  quantity $Q_G$. In the worst-case, the value of $Q_G$ is $\tilde{\Theta}(\sqn+D)$ for a network with $n$ nodes and diameter $D$, such as the pathological network that shows a $\tilde \Omega(\sqrt n+D)$ lower bound for MST and related problems~\cite{sarma2012distributed}. In many other networks of interest, including planar networks, networks which embed into a surface with bounded or polylogarithmic genus, networks with bounded or polylogarithmic tree-width or networks with small separators,  the hardness $Q_G$ is much lower and in fact only $\tO(D)$. Most importantly, whatever the hardness $Q_G$ of a given topology is, there is a simple distributed algorithm which solves the part-wise communication problem in $\tO(Q_G)$ rounds for any valid partitioning in $G$. Thus, $\tO(Q_G)$ round shortcut-based algorithms necessarily have a worst-case running time of $\tO(\sqn + D)$ when expressed in terms of $n$ and $D$; however, they are essentially running as fast as the given topology (and to some extent even the given input) allows it, which in many cases of interest is significantly faster, e.g., $\tO(D)$ rounds.
\end{mdframed}



\smallskip

\subsection{CONGEST model and Shortest Path Problems}\label{sec:SSSPProblems}

\subsubsection{CONGEST Model}

We consider the classical CONGEST model of distributed computing where a network is given by a connected graph $G=(V,E)$ with $n$ nodes and (hop-)diameter $D$. Communication proceeds in synchronous rounds. In each round, each node can send a different $O(\log n)$ bit message to each of its neighbors. Local computations are free and require no time. Nodes have no initial knowledge of the topology $G$, except that we assume that they know $n$ and $D$ up to constants (because these parameters can be computed in $O(D)$ time, which is negligible in our context). All of our algorithms are randomized and succeed with high probability\footnote{Throughout this work, ``with high probability'' or w.h.p. means with probability at least $1 - n^{-C}$ for any desired constant $C$.}. In particular, we assume that each node has access to a private string of randomness, which it can also use to create an $O(\log n)$ bit ID that is unique w.h.p.

In all problems considered here, we assume that every edge $e$ of the network $G$ has a length or cost $w(e)$ associated with it. We assume that all lengths lie in the range $[1,n^{C}]$ for some constant $C$, and are initially only known to nodes adjacent to an edge. Interestingly, our algorithms  also easily handle edges of length zero, but for sake of simplicity, we do not consider such edges in this paper. Any such length or cost function $w$ produces a weighted graph which we call $G(w)$, and induces a distance between any two nodes $u,v \in V$, which we denote with $d_{G(w)}(u,v)$, or simply $d(u,v)$ when the weighted graph $G(w)$ is clear. We denote the weighted diameter of a network with $L = \max_{u,v} d_G(u,v)$.

\subsubsection{Shortest Path Problems}

The most important and most basic problem we are studying in this paper is the single source shortest path problem:

\begin{defn}
        The $\mathbf{\alpha}$\textbf{-approximate SSSP distance problem} assumes as input a weighted graph and a designated source node $s \in V$, and asks for every node $v \in V$ to compute an approximate distance $d_v$ which satisfies $d(s,v) \leq d_v \leq \alpha\cdot d(s,v)$.
\end{defn}
 




We furthermore consider the following generalizations of the SSSP distance problem:

\begin{defn}
        The $\mathbf{\alpha}$\textbf{-approximate SSSP tree problem} assumes that a weighted graph with a designated source node $s \in V$ is given and asks to compute a subtree $T \subseteq G$ such that for every node $v \in V$ distance $d_T(s,v) \leq \alpha \cdot d(s,v)$. Each node should know which of its adjacent edges belong to $T$. 
\end{defn}

\begin{defn}[Approximate distance labeling scheme]
        An \textbf{$(l(n),\alpha)$-approximate distance labeling scheme} is a function that labels the vertices of an input graph with distinct labels up to $l(n)$ bits, such that there exists a polynomial time algorithm that, given the labels of vertices $x$ and $y$, provides an estimate $\tilde d(x,y)$ for the distance between these vertices such that
$$ \tilde d(x,y)\le d(x,y)\le \alpha\cd\tilde d(x,y) .$$
\end{defn}

\begin{defn}[Transshipment Problem]
        The \textbf{transshipment problem} is the problem of uncapacitated min-cost flow. In it every node in a weighted graph $G$ has some real demand $d_v$ such that $\sum_v d_v = 0$. The cost of routing $x$ amount of flow over an edge $e$ of weight $w(e)$ is $x w(e)$. The problem is to compute a flow satisfying all demands of approximate minimum cost. Each node should know the flow an all edges incident to it. 
\end{defn}

\subsection{Our Results}\label{sec:results}

%
%

\subsubsection{SSSP}

Our first result is on computing an approximate, single source shortest path tree in a distributed setting. Note that due to communication limits in the CONGEST model, it is infeasible for each vertex to know the entire shortest path tree. However, it is sufficient that each vertex computes the \textit{local} structure of the tree, which is made specific below.

\begin{restatable}{theorem}{SSSP}\label{SSSP}
Let $G$ be a network graph with edge weights in $ [1,\poly(n)]$, with a specified source vertex, and let $\bt:= (\logn)^{-\Om(1)}$. There is a distributed algorithm that, w.h.p., runs for  $\tilde O(\f1\bt Q_G)$ rounds and outputs a spanning tree that approximates distances to the source to factor $O(L^{O(\log\logn)/\log(1/\bt)})$.\footnote{Recall that $L = \max_{u,v} d_G(u,v)$.} By output, we mean that at the end of the algorithm, every vertex knows its set of incident edges in the spanning tree.
\end{restatable}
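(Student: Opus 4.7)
The plan is to build an approximate shortest-path tree through a hierarchy of clustering stages, each of which is implemented by invoking the part-wise communication primitive guaranteed by the shortcut framework. Concretely, I would set up a sequence of $K = O(\log L / \log(1/\beta))$ scales $\Delta_0 < \Delta_1 < \cdots < \Delta_K$ with $\Delta_{i+1}/\Delta_i \approx 1/\beta$ and $\Delta_K \geq L$, and maintain the invariant that after stage $i$ every vertex $v$ with $d(s,v) \leq \Delta_i$ holds a distance estimate $\tilde d_v$ and a parent pointer $p_v$ satisfying $d(s,v) \leq \tilde d_v \leq \alpha_i \cdot d(s,v)$ where $\alpha_i \leq (\polylog n)^i$.

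At the heart of each stage is a single clustering step. Thinking of the currently ``settled'' vertices (those with finite $\tilde d_v$) as sources, I would grow clusters by running an approximate multi-source Dijkstra-style exploration to depth $\tilde O(1/\beta)$ hops in the \emph{weighted} graph, scaled so that one unit of exploration corresponds to $\Delta_i/\beta$ weight. The resulting clusters form a valid partitioning in the sense of Definition 1, so I can apply the shortcut framework: each vertex uses part-wise aggregation to learn $\min_{u \in \text{cluster}}(\tilde d_u + \text{weight of path to } u \text{ through cluster})$, updating $\tilde d_v$ and $p_v$ accordingly. Because one shortcut call costs $\tilde O(Q_G)$ rounds and each stage performs $\tilde O(1/\beta)$ of them (either implicitly, by running that many relaxation rounds inside the part-wise aggregation, or explicitly, by iterating), the total running time telescopes to $\tilde O(Q_G/\beta)$ after absorbing $K$ into the $\tilde O$.

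For the stretch analysis, I would use a low-diameter decomposition with exponentially shifted radii (à la Miller--Peng--Xu or Bartal) to bound, at each scale, the probability that a true shortest path of weight at most $\Delta_i$ exits and re-enters its cluster. Conditioned on paths staying inside, the local part-wise minimization is exact, so the only multiplicative loss per stage is the $O(\log n)$ distortion of the decomposition. After $K$ stages, the compounded stretch is $(\log n)^{O(K)} = L^{O(\log\log n)/\log(1/\beta)}$, matching the statement. The spanning tree is extracted from the parent pointers $p_v$; acyclicity is enforced by tiebreaking with unique IDs and by verifying, at each update, that $\tilde d_{p_v} < \tilde d_v$, which holds automatically whenever $p_v$ is the predecessor on a valid relaxing path.

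The main obstacle I expect is reconciling two tensions. First, part-wise aggregation as defined is one-shot: it computes a commutative associative aggregate, not shortest paths inside a cluster, so I have to reformulate the intra-cluster Dijkstra as a sequence of $O(1/\beta)$ aggregations (each of which plays the role of one Bellman--Ford relaxation within the cluster), which is what forces the $1/\beta$ blow-up in the running time rather than a $\log L/\log(1/\beta)$ factor. Second, the clusters must themselves be constructed using only shortcut-based communication; for this I would build them greedily around settled vertices with randomized radii, each growth step doubling as a shortcut call, so that the clustering and the distance updates share the same primitive. Everything else -- initializing the first scale with a $\tilde O(\Delta_0)$-round direct BFS, stitching per-stage trees into a single global tree, and certifying the final approximation guarantee with high probability -- follows by standard accounting.
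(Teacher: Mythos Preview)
Your proposal assembles the right ingredients---a hierarchy of $O(\log L/\log(1/\beta))$ scales, exponential-start-time low-diameter decompositions, $O(\log n)$ stretch per level compounding to $L^{O(\log\log n)/\log(1/\beta)}$, and part-wise aggregation via shortcuts---but the execution differs from the paper's and has a real gap precisely at the step you flag as the main obstacle.

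The paper's construction is \emph{source-agnostic} until the last moment. It iterates {\tt LDDSubroutine}: at level $t$, the level-$(t{-}1)$ clusters are contracted to super-vertices (internal edges reweighted to $0$), an LDD with radius $O((1/\beta)\log n)$ is run on the quotient at scale $R^{(t)}=((c_1/\beta)\log n)^t$, and the LDD's BFS-tree edges are added to a growing forest $T$. Each LDD round on the quotient costs $\tilde O(Q_G)$ network rounds: one hop across an inter-cluster edge, then one shortcut flood through a super-vertex. After the forest is complete, distances from $s$ are read off by tree aggregation on $T$. Because the forest achieves the stretch bound only with constant probability per pair, the entire construction is repeated $\Theta(\log n)$ times; each vertex takes $d_{\min}(v)$ over all runs, and the final tree is obtained by every $v$ selecting any neighbor $u$ with $d_{\min}(u)+w(u,v)\le d_{\min}(v)$.

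Your source-centric scheme---grow clusters around the current frontier of settled vertices, then compute intra-cluster distances via ``$O(1/\beta)$ aggregations, each playing the role of one Bellman--Ford relaxation''---does not work as stated. A Bellman--Ford relaxation on the original edges is purely local and needs no aggregation, but $O(1/\beta)$ such relaxations advance information only $O(1/\beta)$ hops, whereas a cluster of weighted diameter $\Delta_i$ can have hop-diameter $\Omega(\Delta_i)$ when edge weights are as small as $1$. What makes the paper's round count work is that the relaxation happens on the \emph{contracted} graph, where each super-vertex is a cluster from the \emph{previous} LDD level and traversing it is a single shortcut call. That requires a nested contraction hierarchy built independently of the source, not clusters freshly grown from the frontier of settled vertices; your phrase ``each growth step doubling as a shortcut call'' only makes sense once you already have such a hierarchy to flood through. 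You also do not address the amplification from a constant-probability per-pair guarantee to w.h.p.\ for all vertices; the paper handles this with the $\Theta(\log n)$ independent forests and the $d_{\min}$-based parent rule, which is what actually produces a spanning tree rather than a forest.
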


By setting $\bt:=n^{-\e}$, $\bt:=2^{-\Th(\sr{\logn})}$, and $\bt:=\log^{-\Th(1/\e)}n$ for constant $\e$, respectively, we obtain the following three corollaries:

\begin{corollary}
Let $G$ be a network graph with edge weights in $ [1,\poly(n)]$, with a specified source vertex. For any constant $\e>0$, there is a distributed algorithm that, w.h.p., runs for  $\tilde O(Q_Gn^\e)$ rounds and outputs a spanning tree that approximates distances to the source to factor $\polylog(n)$.
\end{corollary}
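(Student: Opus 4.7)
The corollary is the special case of Theorem~\ref{SSSP} obtained by setting $\bt := n^{-\e}$; the plan is simply to substitute this choice into the theorem's two output quantities and verify that each matches the claim. The hypothesis $\bt = (\logn)^{-\Om(1)}$ imposed by the theorem is really an upper bound on $\bt$, and is comfortably satisfied since $n^{-\e}$ is asymptotically much smaller than any negative polylog of $n$ for a positive constant $\e$.

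For the running time, substituting into $\tO(\f1\bt Q_G)$ yields $\tO(n^\e Q_G)$, which is exactly the bound stated in the corollary. For the approximation factor, substitution into $O(L^{O(\log\logn)/\log(1/\bt)})$ gives $O(L^{O(\log\logn)/(\e\logn)})$. At this point I would invoke the standing assumption that edge weights lie in $[1,\poly(n)]$: since any shortest path has at most $n-1$ edges, $L \le \poly(n)$ and hence $\log L = O(\logn)$. Taking a base-$2$ logarithm of the expression, the exponent of $L$ multiplied by $\log L$ becomes $O(\log\logn/\e)$, so the whole quantity is $2^{O(\log\logn/\e)} = (\logn)^{O(1/\e)}$, which is $\polylog(n)$ whenever $\e$ is a positive constant.

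There is no genuine obstacle here, as the derivation is essentially a one-line calculation once Theorem~\ref{SSSP} is in hand. The one step worth highlighting is the estimate $\log L = O(\logn)$, since it is what collapses the exponent $\log\logn/\log(1/\bt)$ from the theorem into a polylogarithmic approximation factor; without the polynomial edge-weight bound, the corollary would not follow.
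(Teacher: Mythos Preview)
Your proposal is correct and matches the paper's approach exactly: the paper simply states that the corollary follows by setting $\bt := n^{-\e}$ in Theorem~\ref{SSSP}, and your write-up supplies the routine verification of the running time and approximation factor (including the use of $L \le \poly(n)$) that the paper leaves implicit.
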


\begin{corollary}
Let $G$ be a network graph with edge weights in $ [1,\poly(n)]$, with a specified source vertex. There is a distributed algorithm that, w.h.p., runs for   $\tilde O(Q_G2^{O(\sqrt{\logn})})$ rounds and outputs a spanning tree that approximates distances to the source to factor $2^{O(\sqrt{\logn})}$. 
\end{corollary}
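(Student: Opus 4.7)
The plan is to reduce the approximate SSSP tree problem to $\tO(1/\bt)$ invocations of the part-wise communication primitive from Section~\ref{sec:shortcutsummary}, since each such invocation costs $\tO(Q_G)$ rounds. The scheme will be a recursive "scale reduction": we build an approximate distance oracle from $s$ via $O(\log\log n)$ recursive levels, each of which maps a problem with distance range $[1,L']$ onto a problem with distance range roughly $[1,(L')^{1-1/\log(1/\bt)}]$ at the cost of multiplicative distortion $(L')^{O(1/\log(1/\bt))}$. Iterating $O(\log\log n)$ levels brings the range down to $\polylog(n)$, at which point a bounded-iteration Bellman--Ford-style procedure implemented via shortcuts solves the remainder exactly. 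Telescoping the per-level distortions gives exactly the claimed $L^{O(\log\log n)/\log(1/\bt)}$ approximation.

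The key building block at each level is a Voronoi-style partitioning driven by a random hitting set. Concretely, I would sample a set $S$ of $\tO(1/\bt)$ "skeleton" vertices, and for an appropriately chosen radius $r \approx \bt L'$ I would define each part $S_v$ to be the set of vertices nearest to $v\in S$ within radius $r$. A standard ball-growing argument ensures these parts are connected and vertex-disjoint, i.e.\ form a valid partition in the sense of the shortcut framework, and that every vertex lies within distance $r$ of some skeleton point w.h.p.\ due to the sampling density. A single shortcut-based aggregation within these parts (cost $\tO(Q_G)$) then propagates each skeleton vertex's current distance estimate $\tilde d(s,v)$ to all cell-members, so that each node obtains the value $\min_{v\in S}\{\tilde d(s,v) + \tilde d(v,\cdot)\}$; doing this at $\tO(1/\bt)$ distance scales (and over the $O(\log\log n)$ recursion levels, which is absorbed by a $\polylog$ factor) accounts for the $\tO(Q_G/\bt)$ total round count. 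Distances \emph{among} skeleton vertices are recursively computed on a virtual graph whose edges are $(u,v)$ with weight equal to the current estimate $\tilde d(u,v)$; this is the graph on which the next-level recursion is run.

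Once every vertex $v$ knows an approximate distance estimate $d_v$, the SSSP tree is extracted locally: each $v\neq s$ picks as its parent the incident neighbor $u$ minimizing $d_u + w(u,v)$, breaking ties by ID. A short argument shows (i) the resulting digraph is acyclic and rooted at $s$, because parent distances are strictly smaller modulo tie-breaking, and (ii) the tree distance $d_T(s,v)$ is within a constant factor of $d_v$, preserving the approximation guarantee. Each vertex learns its tree edges directly, as required.

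The hardest step will be managing the composition of distortions across recursion levels while keeping partitions valid. In particular, I expect the main technical work to be (a) verifying that the Voronoi-style parts built from \emph{approximate} distances are still connected and disjoint despite the $L^{O(1/\log(1/\bt))}$ error at each level, and (b) choosing the radii $r$ at each scale so that every node is reached by \emph{some} skeleton vertex at \emph{some} scale, so the telescoped approximation from source to leaf incurs only $O(\log\log n)$ multiplicative losses rather than one loss per distance scale. If those invariants hold, the runtime and approximation bounds drop out of a straightforward induction on recursion depth.
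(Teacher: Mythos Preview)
This corollary is not a standalone result requiring a new algorithm: in the paper it is obtained in one line by substituting $\beta := 2^{-\Theta(\sqrt{\log n})}$ into Theorem~\ref{SSSP}. With that choice, $\tilde O(\tfrac{1}{\beta}Q_G) = \tilde O(Q_G\,2^{O(\sqrt{\log n})})$, and since $L \le \poly(n)$ the approximation factor $L^{O(\log\log n)/\log(1/\beta)}$ becomes $2^{O(\sqrt{\log n}\,\log\log n)}$, which the paper writes as $2^{O(\sqrt{\log n})}$. You are instead proposing to re-derive the theorem itself via a completely different algorithm.

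Even viewed as an alternative proof of Theorem~\ref{SSSP}, your skeleton/scale-reduction scheme has real gaps. First, the recursion-depth arithmetic does not close: if each level shrinks $\log L'$ by a factor $(1-1/\log(1/\beta))$, then reaching $\polylog(n)$ requires roughly $\log(1/\beta)\cdot\log\log n$ levels, not $O(\log\log n)$; with $\beta = 2^{-\Theta(\sqrt{\log n})}$ that is $\tilde\Theta(\sqrt{\log n})$ levels, and the per-level $\tilde O(Q_G/\beta)$ cost then blows past the target runtime. Second, sampling $\tilde O(1/\beta)$ vertices and hoping every node is within radius $\beta L'$ of one of them is not a valid hitting-set argument without structural assumptions you do not have. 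Third, and most seriously, building Voronoi cells around skeleton vertices already requires (approximate) multi-source shortest paths, so the reduction is circular; and recursing on a ``virtual'' skeleton graph is not something the shortcut primitive supports, since parts must be connected in the \emph{physical} network $G$. The paper sidesteps all of this by using iterated low-diameter decompositions (exponential-start-time clustering) and contracting clusters, which keeps every intermediate object a valid partition of $G$ and lets each round of the LDD be simulated via a single part-wise aggregation.
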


\begin{corollary}
Let $G$ be a network graph with edge weights in $ [1,\poly(n)]$, with a specified source vertex. For any constant $\e>0$, there is a distributed algorithm that, w.h.p., runs for  $\tilde O(Q_G)$ rounds and outputs a spanning tree that approximates distances to the source to factor $O(L^{\e})$. \end{corollary}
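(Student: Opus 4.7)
The plan is to build the approximate SSSP tree through a recursive scheme with $\Th(\log\logn)$ levels of progressively coarser vertex samples, using the part-wise communication primitive of the shortcut framework as the workhorse inside each level. The guiding invariant is that after $i$ levels, each vertex's estimate of its distance to the source is accurate to within a factor $L^{O(i/\log(1/\bt))}$ of the true distance, so that $\Th(\log\logn)$ levels reach the claimed bound $L^{O(\log\logn/\log(1/\bt))}$.

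As preprocessing, I would round every edge weight up to the nearest power of $2$, losing only a factor of $2$ in distances and leaving $O(\logn)$ distinct weight classes that can be handled one scale at a time. I would then build a sampling hierarchy $V=V_0\supseteq V_1\supseteq\cds\supseteq V_k$ with $k=\Th(\log\logn)$, where $V_{i+1}$ retains each element of $V_i$ independently with probability $\bt^{1/k}$. This gives $|V_k|=\tO(\bt\cd n)$ and, by standard concentration, ensures that the Voronoi cell around each $V_i$-sample is small enough for a bounded-hop relaxation to be effective with high probability.

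The core iteration at level $i$ is: form Voronoi cells around $V_i$ under the current estimates, run a limited-hop Bellman--Ford-style relaxation inside each cell, and broadcast the updated sample estimates globally. Each intra-cell relaxation step reduces to the part-wise communication problem of Definition~\ref{def:partwisecomprob}, which by the shortcut framework takes $\tO(Q_G)$ rounds regardless of a cell's geometric diameter; the global broadcast of the $\tO(\bt\cd n)$ updated sample values can be pipelined along a precomputed BFS tree in $\tO(D)\subseteq\tO(Q_G)$ rounds per batch. Budgeting $\tO(1/\bt)$ relaxation rounds per level across $\Th(\log\logn)$ levels gives the claimed $\tO(Q_G/\bt)$ total round complexity, and the spanning tree is recovered by each vertex recording the predecessor pointer from its last successful relaxation.

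The main obstacle I anticipate is the approximation analysis, namely proving that the error compounds only as $L^{O(1/\log(1/\bt))}$ per level rather than multiplicatively worse. The crux is a distributed hop-set-style lemma: conditional on the sampling hierarchy, every shortest path in the level-$i$ overlay graph is approximated by a short-hop path in the level-$(i-1)$ overlay up to a factor $L^{O(1/\log(1/\bt))}$, where "short-hop" means few enough hops to be discovered by the budgeted relaxation phases. Controlling this per-level loss while handling the dependencies introduced by nested sampling is the delicate part, but once the lemma is established, composing it over $\Th(\log\logn)$ levels yields exactly the claimed $L^{O(\log\logn/\log(1/\bt))}$ approximation factor for the final tree.
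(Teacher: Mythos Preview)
Your proposal describes an entirely different algorithm from the paper's, and the paper does not even give a separate proof of this corollary: it follows in one line from Theorem~\ref{SSSP} by plugging in $\beta = (\log n)^{-\Theta(1/\epsilon)}$. With that choice, $1/\beta$ is polylogarithmic so the running time $\tilde O(\tfrac{1}{\beta}Q_G)$ becomes $\tilde O(Q_G)$, and the exponent $O(\log\log n)/\log(1/\beta)$ becomes $O(\epsilon)$. The actual algorithmic content behind Theorem~\ref{SSSP} is an iterated low-diameter decomposition (the Miller--Peng--Xu exponential-start clustering, repeated at geometrically growing scales), not a sampling hierarchy or hop-set construction.

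More importantly, your proposed approach has a concrete and fatal gap in the round-complexity accounting. You write that ``the global broadcast of the $\tilde O(\beta n)$ updated sample values can be pipelined along a precomputed BFS tree in $\tilde O(D)\subseteq\tilde O(Q_G)$ rounds per batch.'' This is false: pipelining $k$ messages over a BFS tree of depth $D$ costs $\Theta(D+k)$ rounds in CONGEST, not $\Theta(D)$. For the parameter regime relevant to this corollary you need $\beta = 1/\text{polylog}(n)$, so $\beta n = n/\text{polylog}(n)$, and the broadcast alone already costs $\tilde\Omega(n)$ rounds, swamping any $\tilde O(Q_G)$ bound. This is precisely the $\tilde\Omega(\sqrt{n})$ bottleneck that the sampling-and-broadcast paradigm (Nanongkai, Henzinger--Krinninger--Nanongkai, etc.) runs into and that the paper's LDD-based approach is designed to avoid. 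A secondary problem is that the part-wise communication primitive of Definition~\ref{def:partwisecomprob} computes a \emph{single} aggregate value per part; it does not simulate a round of Bellman--Ford relaxation inside a part, so ``each intra-cell relaxation step reduces to the part-wise communication problem'' is not correct as stated. Finally, the per-level stretch bound of $L^{O(1/\log(1/\beta))}$ is asserted without any mechanism that would produce it; standard hop-set arguments yield multiplicative polylogarithmic stretch, not an $L^{o(1)}$ factor.
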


\subsubsection{Distance labeling schemes}

For distance labeling schemes, we have the following result.

\begin{restatable}{theorem}{DistLabel}\label{DistLabel}
Let $G$ be a network graph with edge weights in $[1,\poly(n)]$. There exists a $(\polylog(n),\allowbreak n^{{O(\log\logn)}/{\log(1/\bt)}})$ approximate distance labeling scheme that runs in  $\tilde{O}(\f1\bt Q_G)$ rounds.
\end{restatable}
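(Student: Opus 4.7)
The plan is to implement a Thorup--Zwick style distance labeling scheme in the distributed model, replacing every exact distance query with an invocation of the approximate SSSP algorithm from Theorem~\ref{SSSP}. Fix $k = \Theta(\logn/\log\logn)$ and sample a hierarchy $V = A_0 \supseteq A_1 \supseteq \cdots \supseteq A_{k-1} \supseteq A_k = \emptyset$, where each vertex of $A_{i-1}$ is retained in $A_i$ independently with probability $n^{-1/k}$. Standard Chernoff bounds give $|A_i| = \tO(n^{1-i/k})$ w.h.p.; the classical Thorup--Zwick bunch-size argument, which depends only on the total ordering of vertices under whichever distance function is used to define the bunches, still gives bunches of expected size $\tO(k n^{1/k}) = \polylog(n)$ regardless of whether exact or approximate distances drive the construction.

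For each level $i$, I would invoke the approximate multi-source SSSP of Theorem~\ref{SSSP} with source set $A_i$, implemented by attaching a virtual zero-weight super-source to every vertex of $A_i$. Each call costs $\tO(\f1\bt Q_G)$ rounds and, because $L \le \poly(n)$, returns at every $v$ an approximate distance $\tilde d(v, A_i)$ and a realizing pivot $p_i(v) \in A_i$, with per-call stretch $\alpha := n^{O(\log\logn)/\log(1/\bt)}$. Since there are only $k = \polylog(n)$ levels, the overall round count stays within $\tO(\f1\bt Q_G)$. The label of $v$ consists of the $k$ pivots $(p_i(v), \tilde d(v, p_i(v)))$ together with the approximate bunch $B(v) = \bigcup_i \{(w, \tilde d(v, w)) : w \in A_i \setminus A_{i+1},\ \tilde d(v, w) < \tilde d(v, A_{i+1})\}$; populating the bunches in a distributed fashion is done by piggy-backing vertex identifiers on the same shortcut-based primitives used inside Theorem~\ref{SSSP}, yielding $\polylog(n)$-bit labels. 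The query is the standard Thorup--Zwick walk: alternate between $u$ and $v$, at iteration $i$ take the pivot of the currently active endpoint, and return $\tilde d(u, p) + \tilde d(p, v)$ the first time the pivot $p$ lies in the opposite vertex's bunch. Because $A_k = \emptyset$, every vertex of $A_{k-1}$ lies in every bunch, so the loop terminates in at most $k$ iterations.

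The hardest part will be the stretch analysis. A naive adaptation of the classical Thorup--Zwick inductive argument loses a multiplicative factor of $\alpha$ at each of the $k$ hops of the walk, which would compound to a useless $\alpha^{O(k)}$ stretch. The resolution I would pursue is that the algorithm outputs only a single two-segment estimate $\tilde d(u, p) + \tilde d(p, v)$, so exactly one $d \to \tilde d$ conversion appears in the final inequality $d(u,v) \le \tilde d(u, p) + \tilde d(p, v) \le \alpha \cdot (d(u, p) + d(p, v))$. What remains is to show $d(u, p) + d(p, v) \le \poly(k) \cdot d(u, v)$ by an inductive argument that carefully uses the $\tilde d$-based termination criterion $p \in B(v)$ and the defining inequality $\tilde d(v, p) < \tilde d(v, A_{i+1})$ at every non-terminal step so that the true-distance walk $W_0 = u, W_1, \ldots$ stays at distance $O(k) d(u,v)$ from $u$ without further multiplying in $\alpha$'s. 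Combined with the per-call stretch $\alpha$, this yields the claimed $n^{O(\log\logn)/\log(1/\bt)}$ approximation; combined with the $k = \polylog(n)$ level count, it yields the claimed $\tO(\f1\bt Q_G)$ running time.
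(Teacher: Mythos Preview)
Your approach is genuinely different from the paper's and, as written, has a real gap in the stretch analysis.

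The paper does not use Thorup--Zwick at all. It simply reruns {\tt ExpectedSPForest} $\Theta(\log n)$ times for each of $O(\log(1/\beta))$ initial scales $R_0$, and the label of a vertex is the list of $(\text{cluster ID}, R)$ pairs it belongs to across all runs and all levels. The query is ``output the smallest $R$ for which $x$ and $y$ share a cluster ID.'' The stretch bound then comes directly from Lemma~\ref{ConstProbSP}: with constant probability per run, $x$ and $y$ land in a common cluster at a scale $R = O(d_G(x,y)^{1+O(\log\log n)/\log(1/\beta)})$, and repeating $\Theta(\log n)$ times makes this hold w.h.p.\ for every pair. No pivots, no bunches, no inductive walk.

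The gap in your plan is exactly where you flag it as ``the hardest part.'' You are right that the final output $\tilde d(u,p)+\tilde d(p,v)$ incurs only one factor of $\alpha$ over $d(u,p)+d(p,v)$. But you then need $d(u,p)+d(p,v)\le\poly(k)\cdot d(u,v)$, and the standard inductive step cannot deliver this with only $\tilde d$ information. At a non-terminal iteration you know $w_{i-1}\notin B(v)$, i.e.\ $\tilde d(v,w_{i-1})\ge\tilde d(v,A_i)$; this yields $d(v,w_i)\le\tilde d(v,w_i)=\tilde d(v,A_i)\le\tilde d(v,w_{i-1})\le\alpha\,d(v,w_{i-1})$. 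That $\alpha$ is unavoidable here because $\tilde d$ need not satisfy any triangle inequality (different levels use independent SSSP trees), so you cannot chain $\tilde d$ inequalities without converting back to $d$. The recursion becomes $d(v,w_i)\le\alpha\,(d(u,v)+d(u,w_{i-1}))$, which solves to $\alpha^{\Theta(k)}d(u,v)$. With $k=\Theta(\log n/\log\log n)$ and $\alpha=n^{O(\log\log n)/\log(1/\beta)}$ this is $n^{\Theta(\log n)/\log(1/\beta)}$, far worse than the claimed bound.

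A secondary issue is the distributed bunch construction. Multi-source SSSP from $A_i$ gives each $v$ its pivot $p_i(v)$ and $\tilde d(v,A_i)$, but not $\tilde d(v,w)$ for every $w$ that should appear in $B(v)$. Getting those distances requires either per-landmark SSSP (too many sources at low levels) or a source-detection primitive, neither of which you show fits in $\tilde O(\tfrac1\beta Q_G)$ rounds; existing distributed Thorup--Zwick constructions spend $\tilde\Omega(\sqrt n)$ rounds on exactly this step. The paper sidesteps both issues because its labels are just cluster memberships from the LDD hierarchy it has already built.
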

Setting $\bt:=n^\e$ gives the following corollary:

\begin{corollary}
Let $G$ be a network graph with edge weights in $[1,\poly(n)]$, There exists a $(\polylog(n),\allowbreak\polylog(n))$ approximate distance labeling scheme that runs in  $\tilde{O}(Q_Gn^\e)$ rounds.
\end{corollary}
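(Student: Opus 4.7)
The plan is to obtain this corollary as an immediate instantiation of Theorem~\ref{DistLabel}, with no new algorithmic work: the entire task is to verify that the choice of $\bt$ stated in the corollary converts the generic bounds of the theorem into the claimed polylogarithmic stretch and $\tO(Q_G n^\e)$ round complexity. (I read the stated ``$\bt:=n^\e$'' as a sign typo for $\bt:=n^{-\e}$, since the theorem requires $\bt < 1$ and the round bound $\tO(Q_G/\bt)$ only turns into $\tO(Q_G n^\e)$ under the latter choice.)

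Concretely, I would first invoke Theorem~\ref{DistLabel} as a black box with parameter $\bt := n^{-\e}$, producing an $(\polylog(n),\alpha)$-approximate distance labeling scheme computable in $\tO(Q_G/\bt) = \tO(Q_G\cd n^\e)$ rounds, where
\[
\alpha \;=\; n^{O(\log\log n)/\log(1/\bt)}.
\]
The label length $\polylog(n)$ from the theorem is inherited unchanged, so the only thing left to check is that, for every constant $\e>0$, the quantity $\alpha$ is itself $\polylog(n)$.

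To finish, I would substitute $\log(1/\bt)=\e\log n$ into the exponent and simplify:
\[
\alpha \;=\; n^{O(\log\log n)/(\e\log n)} \;=\; 2^{O(\log\log n)/\e} \;=\; (\log n)^{O(1/\e)} \;=\; \polylog(n),
\]
where the last equality uses that $\e$ is a fixed constant, so $1/\e$ is constant. Plugging this into the conclusion of Theorem~\ref{DistLabel} yields exactly a $(\polylog(n),\polylog(n))$-approximate distance labeling scheme computed in $\tO(Q_G n^\e)$ rounds, as the corollary asserts. Since this is a pure parameter substitution, there is no real obstacle; the only thing to be careful about is to confirm that the ``$O(\log\log n)$'' inside the exponent does not hide a dependence on $\bt$ or $\e$, which it does not in the statement of Theorem~\ref{DistLabel}, so the simplification above is valid.
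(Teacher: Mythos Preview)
Your proposal is correct and is exactly the paper's approach: the paper simply says ``Setting $\bt:=n^\e$ gives the following corollary,'' and your computation (including the observation that $\bt=n^{-\e}$ is what is meant) is the intended one-line substitution into Theorem~\ref{DistLabel}.
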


\subsubsection{Transshipment problem}

We also provide a distributed algorithm to compute an approximate flow for the transshipment problem.

\begin{restatable}{theorem}{TS} \label{TS}
Let $G$ be a network graph with edge weights in $ [1,\poly(n)]$ and demands that sum to zero, and let $\bt:= (\logn)^{-\Om(1)}$. There is an algorithm that, w.h.p., runs for  $\tilde{O}(\f1\bt Q_G)$ rounds and computes a $\tilde O(\f1\bt n^{O(\log\logn)/\log(1/\bt)})$-approximate flow.
\end{restatable}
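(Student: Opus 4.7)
The plan is to build on the approximate SSSP algorithm of Theorem~\ref{SSSP} and then apply the gradient-descent / multiplicative-weights framework for uncapacitated min-cost flow, in the style of Sherman and of Becker, Karrenbauer, Krinninger, and Lenzen. The high-level idea is that Theorem~\ref{SSSP} already supplies the key primitive needed by such frameworks, namely an approximate oblivious routing that can be applied in $\tO(Q_G)$ rounds, and that iterating it $\tO(1/\bt)$ times refines a feasible flow into a $(1/\bt) \cdot \poly$-approximation of the optimum.

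First I would use Theorem~\ref{SSSP} (together with the shortcut primitive of Definition~\ref{def:partwisecomprob}) to set up a linear map $R$ that takes any demand vector and returns a feasible flow of cost at most $\alpha \cdot \mathrm{OPT}$, where $\alpha = O(L^{O(\log\logn)/\log(1/\bt)})$ is the stretch inherited from Theorem~\ref{SSSP}. Concretely, the approximate SSSP tree rooted at an aggregator (discovered via shortcut-based part-wise communication) supplies an $\alpha$-approximate distance oracle; combining shortest-path potentials with the shortcut aggregation primitive converts this oracle into an operator $R$ that can be applied in $\tO(Q_G)$ rounds. Each node only needs to know the restriction of $R(b)$ to its incident edges, which is exactly what the shortcut framework delivers.

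Next I would run a gradient-descent outer loop with $\tO(1/\bt)$ iterations, which is the standard Sherman-type iteration count for driving the residual demand down when only an $\alpha$-approximate preconditioner is available. In each iteration, the current flow induces a residual demand vector, which is then routed by one application of $R$ to produce a correction flow; the corrections are summed on each edge using the shortcut framework. The total round complexity is $\tO(1/\bt) \cdot \tO(Q_G) = \tO(\tfrac{1}{\bt} Q_G)$, matching the bound in the theorem statement, and the final flow is made known at each incident edge via a final application of the shortcut primitive.

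The main obstacle is the approximation analysis: showing that running $\tO(1/\bt)$ iterations with an $\alpha$-approximate (rather than exact) oblivious routing $R$ composes to an overall approximation of $\tO(\tfrac{1}{\bt} \alpha) = \tO(\tfrac{1}{\bt} n^{O(\log\logn)/\log(1/\bt)})$ rather than degrading to something like $\alpha^{1/\bt}$. The argument has to track the potential function of the Sherman framework carefully and use the fact that residual demands shrink geometrically while the cumulative flow cost only grows by a factor of $\alpha$ over the approximation that $R$ already provides. A secondary (but routine) issue is verifying that all intermediate quantities — residual demands, edge flows, potential values — fit in $O(\log n)$-bit messages given that weights lie in $[1,\poly(n)]$, so that every step can indeed be implemented in the CONGEST model at the claimed cost of $\tO(Q_G)$ rounds per iteration.
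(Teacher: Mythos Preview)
Your proposal takes a very different route from the paper and, as written, has real gaps.

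The paper's proof is much simpler than what you describe. It does not use any Sherman / Becker--Karrenbauer--Krinninger--Lenzen gradient-descent loop at all. Instead, it routes the entire demand vector through a \emph{single} random tree: run {\tt ExpectedSPForest} once, root the resulting tree arbitrarily, compute for each vertex $v$ the total demand $F(v)$ in its subtree via {\tt AggregateSubtree}, and push $F(v)$ units along the edge to its parent. Lemma~\ref{ConstProbSP} gives that $\E[d_T(s,t)] = \tO(\tfrac{1}{\bt}\,n^{O(\log\log n)/\log(1/\bt)})\cdot d_G(s,t)$ for every pair, so by linearity of expectation the tree routing costs at most that factor times $\mathrm{OPT}$ in expectation; a cut argument shows the tree flow is no worse than the optimal tree routing. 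Repeating $\Theta(\log n)$ times and taking the cheapest flow turns the expectation into a w.h.p.\ guarantee via Markov. In fact, the paper explicitly flags the multiplicative-weights bootstrap you invoke as \emph{future work} for obtaining $(1+\eps)$-approximations, not as the mechanism behind Theorem~\ref{TS}.

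There are two concrete problems with your plan. First, you propose to build the routing operator $R$ from Theorem~\ref{SSSP}, but that theorem only yields a single-source tree with stretch guarantees to one root $s$; transshipment has arbitrary demands and needs an all-pairs stretch bound, which is precisely what {\tt ExpectedSPForest} (via Lemma~\ref{ConstProbSP}) provides and what an SSSP tree does not. Second, even granting an $\alpha$-approximate linear oblivious routing $R$, one application of $R$ to the demand vector already yields an $\alpha$-approximate feasible flow; the $\tO(1/\bt)$ gradient-descent iterations you add are not what produces the $\tfrac{1}{\bt}$ factor in the theorem (that factor is already inside the per-pair stretch of Lemma~\ref{ConstProbSP}), and your ``main obstacle'' analysis of how $\tO(1/\bt)$ iterations compose to $\tO(\alpha/\bt)$ is neither standard nor needed here.
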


\subsection{Related Work}\label{sec:related}

The complexity theoretic issues in the design of distributed graph algorithms for the CONGEST model have received much attention in the last decade, and extensive progress has been made for many problems: Minimum-Spanning Tree \cite{kutten1995fast}, Minimum Cut \cite{nanongkai2014almost}, Diameter \cite{lenzen2015fast}, Shortest Path \cite{becker2016near}, and so on. Most of those problems have $\tilde \Th(\sr n+D)$-round upper and lower bounds for some sort of approximation guarantee \cite{sarma2012distributed}. The notion of low-congestion shortcuts was invented as a framework  of circumventing these lower bounds \cite{ghaffari2016distributed}. Specifically, the ideas present in \cite{ghaffari2016distributed} can be turned into very short and clean $\tilde O(D+\sr n)$ round algorithms for general graphs, and near-optimal $\tilde O(D)$ round algorithms for special classes of graphs, for problems such as MST and Min-Cut.

However, the shortcut framework cannot be applied directly to the SSSP problem, since, unlike MST and Min-Cut, shortest path problems are not inherently parallelizable. For SSSP, a new technique based on multiplicative weights results in a $(1+\e)$-approximation to SSSP in $\tilde O(D+\sr n)$ time on general graphs \cite{becker2016near}. However, until this paper, not much work has been done on circumventing the $\tilde\Om(D+\sr n)$ lower bound on restricted classes of graphs or otherwise.

As a subroutine to computing shortest paths, we will be running low-diameter graph decompositions. Low diameter decompositions have a long history in the centralized~\cite{bartal1996probabilistic,linial1991decomposing} and parallel~\cite{awerbuch1992low,miller2013parallel,blelloch2014nearly} settings, and have been applied in the distributed setting to compute a network decomposition with low ``chromatic number''~\cite{elkin2016distributed}. 

\section{Distance-Preserving Tree}

Let $G$ be a weighted graph with $Q_G$-quality shortcuts. For a reader not familiar with shortcuts or the material in Appendix~\ref{sec:shortcutframework} of the full version, the parameter $Q_G$ intuitively measures how easy it is for connected components of $G$ to communicate within each other.
As a general rule, the ``nicer'' the graph $G$ is, the smaller the quantity $Q_G$ and the closer it gets to the optimal $D$. For example, if $G$ is a planar graph, then $Q_G=\tO(D)$.

We first consider the problem of finding a tree such that, for every pair of vertices $x,y\in V$, their distance is well-approximated with constant probability. 
Our algorithm is an adaptation of the algorithm of Section 5.4 from~\cite{alon1995graph}. 


To motivate the ideas behind the algorithm, we describe it in a parallel framework with graph contraction support. In each iteration, the algorithm runs a low diameter decomposition (defined below; see Appendix~\ref{sec:ldd} of the full version for details) on the graph and contracts each component into a single vertex.
To compute the tree as described above, take the set of edges inside the BFS trees formed by each LDD, and map them back to the original graph. The resulting tree is simply the (disjoint) union of these edges over all iterations. Of course, in a distributed framework, we cannot maintain contracted graphs, so we substitute each contracted vertex with a part of the original graph with zero-weight edges inside. To communicate efficiently between the parts, we establish shortcuts within each part.

\begin{defn}
For a weighted graph $G=(V,E)$, a low-diameter decomposition (LDD) of $G$ is a probabilistic distribution over partitions of $V$ into connected components $S_1,\lds,S_k$, such that 
\be
\im W.h.p., every induced graph $G[S_i]$ has low weighted diameter.
\im For every two vertices $x,y\in V$, the probability that they belong to the same component is bounded from below by some function depending on $d_G(x,y)$.
\ee
\end{defn}

We now describe the algorithm in detail. For a weight function $w:E\to \R$, denote $G(w)$ to be the graph $G$ whose edges are reweighted according to $w$. The algorithm maintains a weight function $w:E\to\{0\}\cup[R,\poly(n)]$ on the set of edges, for a given value $R$. The zero-weight edges connect vertices within each component, while the threshold $R$ increases geometrically over time. With a larger threshold $R$, we can compute the LDD on $G(\f1Rw)$, allowing the LDD to travel farther in the same amount of time. If $R$ is large enough, this graph still has edge weights at least 1 in between components, so computing the LDD is feasible in a distributed manner.

In addition to $w$, the algorithm also maintains a forest $T$, which gets new edges every iteration until it results in the approximate shortest path tree. Consider the following {\tt{LDDSubroutine}}, which we apply iteratively to $w$ and $T$.

\begin{mdframed}[roundcorner=4pt, backgroundcolor=white]
        \textbf{Algorithm $(w',T')={\tt{LDDSubroutine}}(w,T,\bt,R)$}



Algorithm:
\begin{enumerate}[itemsep=-1mm]
\im Initially, set $w':=w$ and $T':=T$.
\im Consider $G_0(w)$, the subgraph of $G$ with only the edges $e$ with $w(e)=0$.
\im Let $H$ be the (multi-)graph with every connected component of $G_0(w)$ contracted to a single vertex. Denote $w_H$ as the function $w$ restricted to the edges in $H$. 
        \im Simulate a LDD on  $H(\f1Rw_H)$ with parameter $\f1\bt$ (see Appendix~\ref{sec:ldd} of the full version). The specifics are deferred to the next section.
\im For every edge in $H$ that is part of a BFS tree in the LDD, add that edge to $T'$. 
\im For every edge $e$ in $H$ completely inside a LDD component, set $w'(e):=0$.
\im For every other edge $e$ in $H$, set $w'(e):=w(e)+\f{c_1}\bt\log n$ (for large enough constant $c_1$).
\im Output $(w',T')$.
\ee

\end{mdframed}

\subsection{Correctness}

The following two lemmas bound the maximum weighted diameter of a component, and therefore also the running time of the subroutine, as well as the probability that two vertices close together belong to the same component. Their proofs are natural generalizations of those in~\cite{miller2013parallel} and appear in Appendix~\ref{sec:ldd} of the full version.

\begin{restatable}{lemma}{LDDDiam} \label{LDDDiam}
W.h.p., each component in {\tt{LowDiameterDecomposition}} has weighted diameter $O(\f1\bt \logn)$.
\end{restatable}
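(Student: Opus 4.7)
The plan is to lift the classical Miller--Peng--Xu exponential-shift analysis of low-diameter decompositions to the weighted setting. I will assume that {\tt LowDiameterDecomposition} is the natural weighted MPX clustering: each vertex $u$ independently draws a shift $\delta_u \sim \textup{Exp}(\beta)$, and every vertex $v$ is assigned to the cluster centered at the vertex $u^*(v) = \arg\min_u \bigl(d(u,v) - \delta_u\bigr)$, where $d(\cdot,\cdot)$ denotes weighted shortest-path distance in the underlying graph. The resulting clusters form a partition into connected components (connectivity follows because the mapping $v \mapsto u^*(v)$ is consistent along any shortest $(u^*(v),v)$-path).

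The first step is to control the maximum shift. Since $\Pr[\delta_u > t] = e^{-\beta t}$, choosing $t = \Theta(\beta^{-1} \log n)$ and union-bounding over the $n$ vertices gives $\max_u \delta_u = O(\beta^{-1}\log n)$ with high probability. The second step is the cluster radius bound: fix any $v$ in the cluster centered at $u = u^*(v)$. The defining inequality $d(u,v) - \delta_u \le d(v,v) - \delta_v = -\delta_v \le 0$ yields
\[
 d(u,v) \;\le\; \delta_u \;\le\; \max_u \delta_u \;=\; O\!\left(\tfrac{\log n}{\beta}\right).
\]
Thus every cluster has weighted radius $O(\beta^{-1}\log n)$ from its center, so its weighted diameter (taken inside the induced subgraph) is at most twice this quantity, establishing the claim.

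The main technical nuisance, which is why the proof is relegated to the appendix, is that the lemma is stated about what the \emph{distributed} simulation actually produces. Three points need care: (i) the shifts must be truncated to fit into $O(\log n)$-bit messages, but truncating at $\Theta(\beta^{-1}\log n)$ is lossless w.h.p.\ by the step-1 tail bound; (ii) the argmin is realized by a BFS-like flooding on $H(\tfrac{1}{R}w_H)$ of depth $O(\beta^{-1}\log n)$, so the diameter statement applies to the reweighted graph (this is consistent with how it will later be used to bound the running time of the LDD simulation); and (iii) one must argue that the path realizing $d(u,v)$ in the above inequality lies inside the cluster, i.e., that every intermediate vertex on a shortest $(u,v)$-path is also assigned to $u$. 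This last point follows from the same min-rule: for any $w$ on such a path, $d(u,w) + d(w,v) = d(u,v)$, and combining with the defining inequality for $v$ forces $u^*(w) = u$.

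The one step I expect to require the most care in writing up is this path-inclusion argument, because the LDD output must be a \emph{connected} component in the induced subgraph, and the radius bound is only meaningful if distances are measured inside that induced subgraph rather than in the ambient graph. Once this is nailed down, the high-probability bound on the maximum shift is routine and gives the lemma.
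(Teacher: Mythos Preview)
Your argument is correct and follows essentially the same route as the paper: bound $\max_u \delta_u = O(\beta^{-1}\log n)$ w.h.p.\ by a tail bound plus union bound, then observe that any vertex assigned to center $u$ lies within weighted distance $\delta_u$ of $u$, so each cluster has strong diameter $O(\beta^{-1}\log n)$. The only discrepancy is that the paper's {\tt LowDiameterDecomposition} uses \emph{geometric} rather than exponential shifts (precisely to avoid the rounding nuisances you flag in point~(i)), and the paper phrases the radius bound as ``the BFS runs for $\tfrac{c}{\beta}\log n$ rounds'' rather than via your inequality $d(u,v)\le\delta_u$; your more explicit path-inclusion argument for the strong-diameter claim is a welcome addition that the paper leaves implicit.
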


\begin{restatable}{lemma}{LDDProb} \label{LDDProb}
For vertices $u,v\in V$ of (weighted) distance $d$, the probability that $u$ and $v$ belong to the same component is $e^{-O(d\bt)}$.
\end{restatable}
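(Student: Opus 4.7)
The plan is to follow the Miller--Peng--Xu exponential-shifts construction adapted to weighted distances. Each vertex $w \in V$ independently samples a shift $\delta_w \sim \mathrm{Exp}(\beta)$, and each vertex $x$ is assigned to the center $w$ minimizing the shifted arrival time $a_w(x) := d_G(w,x) - \delta_w$. The resulting partition of $V$ by winning centers is precisely the LDD, so two vertices lie in the same component exactly when they share the same winner.

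First I would record the key probabilistic input: by the memoryless property of the exponential distribution, for any fixed vertex $x$, conditional on the identity of the winning center $w^*(x)$ and on its arrival time, the gap between the smallest and the second-smallest value in $\{a_w(x)\}_{w \in V}$ is itself stochastically at least $\mathrm{Exp}(\beta)$. In particular, this gap exceeds any threshold $t \ge 0$ with probability at least $e^{-\beta t}$.

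The main step is a short triangle-inequality coupling between $u$ and $v$. Because $|d_G(w,u) - d_G(w,v)| \le d_G(u,v) = d$ for every candidate center $w$, we have $|a_w(u) - a_w(v)| \le d$ simultaneously for all $w$. Consequently, if at $u$ the winner leads the runner-up by strictly more than $2d$, that same center remains the minimizer at $v$, and $u,v$ share a component. By the gap fact above, this event has probability at least $e^{-\beta \cdot 2d} = e^{-O(d\beta)}$, which is the desired bound.

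The main obstacle, and the only part requiring care, is checking that the distributed LDD invoked by \texttt{LDDSubroutine} truly realizes (or couples to) the exponential-shift partition with rate $\Theta(\beta)$, since the actual implementation will discretize the shifts and simulate a BFS from ``awakened'' centers on the contracted, rescaled graph $H(\frac{1}{R} w_H)$. I would handle this the same way as~\cite{miller2013parallel}: truncate each $\delta_w$ at $O(\frac{1}{\beta}\log n)$ and round to granularity $1$, so that only $O(\frac{1}{\beta}\log n)$ BFS layers are needed (matching the weighted-diameter bound in Lemma~\ref{LDDDiam}), and argue that both the truncation and rounding affect the separation probability only within constants absorbed by the $O(d\beta)$ in the exponent. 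With this faithful simulation in hand, the centralized Miller--Peng--Xu analysis transfers verbatim and yields the stated bound.
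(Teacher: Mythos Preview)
Your proposal is correct and follows essentially the same argument as the paper: both show that if, at $u$, the winning center's arrival time leads the runner-up by more than $2d$, then $u$ and $v$ share a center, and both bound this gap probability via the memoryless property. The only notable difference is that the paper works directly with geometric random variables (so the discretization you flag as an obstacle is built in from the start rather than handled as a post-hoc rounding), and phrases the $2d$-gap implication as a short contradiction rather than via your explicit $|a_w(u)-a_w(v)|\le d$ inequality; these are cosmetic differences.
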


We now describe in more detail how to simulate the LDD in $H(\f1Rw_H)$ in the desired running time. Observe that we cannot directly compute the LDD on the contracted graph, since the contracted vertices are actually entire parts with limited communication between them. However, we can apply shortcuts to communicate quickly within the parts, up to the quality of the shortcut.

\begin{lemma} \label{LDDSim}
The LDD on the contracted graph (step 4  of {\tt LDDSubroutine})   can be simulated with a $\tilde{O}(Q_G)$ multiplicative overhead in running time. In other words, if the LDD takes $d$ rounds, then it can be simulated in $\tilde{O}(Q_Gd)$ rounds in the network $G$.
\end{lemma}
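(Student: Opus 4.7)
\subsection*{Proof plan for Lemma~\ref{LDDSim}}

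The plan is to view the contraction $G \to H$ through the lens of the shortcut framework: the connected components of the zero-weight subgraph $G_0(w)$ form a collection $S_1, S_2, \ldots$ of vertex-disjoint, induced-connected subsets of $V$, i.e.\ a \emph{valid partitioning} of $G$ in the sense of the first definition of Section~\ref{sec:shortcutsummary}. Each node of $H$ corresponds to one such part $S_i$, and each edge of $H$ corresponds to one (or several) edges of $G$ between distinct parts. By the summary of the shortcut framework, any part-wise communication problem on this partitioning (aggregating or broadcasting via an associative commutative operation such as $\min$) can be solved in $\tilde O(Q_G)$ rounds.

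The core of the argument is then a one-round simulation lemma: \emph{any single round of the CONGEST model on $H(\tfrac1R w_H)$ can be simulated in $\tilde O(Q_G)$ rounds on $G$.} I would implement one simulated round as three phases. First, each part broadcasts the $O(\log n)$-bit ``state'' of its super-vertex to all of its member vertices; this is a part-wise communication with $\oplus$ equal to any projection, and costs $\tilde O(Q_G)$. Second, for every cross-part edge $e \in E(G)$ with nonzero weight (the edges that survive contraction and become edges of $H$), the two endpoints in $G$ exchange the corresponding $O(\log n)$-bit message in a single round; this is a single CONGEST round in $G$. Third, each part aggregates the messages received on its boundary edges to produce the next state of its super-vertex, which is another part-wise communication (with $\oplus=\min$, or more generally whatever aggregator the LDD uses), costing $\tilde O(Q_G)$. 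Iterating this simulation for the $d$ rounds that the LDD executes on $H$ gives the claimed $\tilde O(Q_G \cdot d)$ total.

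Concretely for the Miller--Peng--Xu style LDD used here, every super-vertex $S_i$ of $H$ independently samples one exponential shift $\delta_{S_i}$ (a single vertex of $S_i$ picks it and the part-wise broadcast disseminates it), and then the algorithm is a weighted BFS from all sources where source $S_i$ ``wakes up'' at time $-\delta_{S_i}$ and each super-vertex ultimately joins the cluster whose shifted arrival time is smallest. Each round of this BFS only needs a super-vertex to know its current best (shift, source)-pair and to relay an updated value along each outgoing $H$-edge; both operations fit the three-phase template above. The resulting BFS tree edges that are needed for step~5 of \texttt{LDDSubroutine} are precisely the cross-part edges that delivered the winning value, so each such edge is identified locally by its endpoints during the simulation and requires no further communication.

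The main obstacle I expect is not the asymptotic cost but rather the bookkeeping: ensuring that a super-vertex is genuinely acting as a single CONGEST node (i.e.\ that its state is consistent across all member vertices of $S_i$ at the end of each simulated round) and that $H$-edges are handled even when several $G$-edges connect the same two parts (only the minimum-weight one matters, and this can be selected once at the start via a part-wise $\min$ over edge weights, so it does not increase the overhead). Once these details are in place, the bound $\tilde O(Q_G \cdot d)$ follows by summing the $\tilde O(Q_G)$ cost of the two part-wise phases and the $O(1)$ cost of the cross-edge phase over the $d$ rounds of the LDD.
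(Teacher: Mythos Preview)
Your proposal is correct and follows essentially the same approach as the paper: treat the connected components of $G_0(w)$ as a valid partitioning, and simulate each round of the LDD on $H$ by one CONGEST round across the inter-part edges together with a part-wise $\min$-aggregation and broadcast inside each part costing $\tilde O(Q_G)$. The paper's proof is just a terser version of your three-phase template (it collapses your broadcast and aggregate phases into a single ``flood the minimum received time through the part'' step), and it does not spell out the bookkeeping you flag about parallel $G$-edges or state consistency, but the argument is the same.
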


\begin{proof}
Define the parts of $V$ to be the connected components of $G$, and compute a set of $\tilde{O}(Q_G)$-quality shortcuts, one for each part.  In every round of the LDD on  $H(\f1Rw_H)$, we perform two steps sequentially: one to traverse nonzero weight edges between parts, and one to flood through the zero weight edges within each part. To take care of the edges between parts, note that every such edge has weight at least 1, so we can send them directly through the network $G$. To flood  through the zero edges within each part, it suffices to compute the minimum time $t$ that is received by any vertex, and then broadcast the message ``$t$'' to the entire part. By routing through shortcuts, this can be done in $\tilde{O}(Q_G)$ time per partition. Overall, every round of the LDD is replaced by $\tilde{O}(Q_G)$ rounds in the network $G$, hence the multiplicative overhead.
\end{proof}

Together with Lemma \ref{LDDDiam}, we get a running time of $\tilde{O}(\f1\bt Q_G)$. 
\begin{defn}
Let $w:E\to\R$ be a weight function, and $T\subseteq G$ a forest. Define $G_0(w)$ to be the subgraph of $G$ with only the edges $e$ with $w(e)=0$.
Let $C_1,C_2,\lds$ of $G$ be the connected components of $G_0(w)$. We say that $(w,T)$ satisfies the \textbf{subroutine invariant with parameter $R$} if the following conditions hold:
\be[itemsep=-1mm]
\im The weighted diameter of each part $C_i$ using edge weights in $G$ is  at most $R$.
\im Every edge within a part $C_i$ has weight 0 in $w$.
\im Every edge between two parts $C_i,C_j$ has weight at least $R$ in $w$.
\im For all $x,y$ belonging to the same part $C_i$, $d_T(x,y)\le R$.
\im $T$ has a spanning tree within each part $C_i$, and no edges in between parts.
\ee
\end{defn}

\begin{lemma}\label{SubProps}
Fix parameter $\bt$. Suppose that the input $(w,T)$ to {\tt{LDDSubroutine}} satisfies the subroutine invariant with parameter $R$. Then, w.h.p., for large enough constants $c_1$ and $c_2$,
\bi
\im The output $(w',T')$ satisfies the subroutine invariant with parameter $(\f{c_1}\bt\logn)R$.
\im For all $x,y\in V$, $\E[d_{G(w')}(x,y)]\le (c_2\logn)d_{G(w)}(x,y)$.
\ei
\end{lemma}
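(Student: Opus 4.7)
The plan is to verify the five invariant conditions for $(w',T')$ at the new scale $R' := (c_1/\bt)\logn\cdot R$, and then, in a separate step, to bound the expected edge stretch edge-by-edge and sum via linearity of expectation.

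For the invariant, each new part is the union of those old parts $C_i$ whose contracted images in $H$ sit inside a common LDD component. For condition~1, Lemma~\ref{LDDDiam} gives an LDD-component weighted diameter of $O(\logn/\bt)$ in $H(\f1R w_H)$, which is $O(R\logn/\bt)$ in the $w_H$ weighting. Since every between-part edge has $w$-weight at least $R$, any such BFS path traverses at most $O(\logn/\bt)$ old parts, and each old part contributes at most $R$ to the $G$-diameter by the input's invariant~1; thus the $G$-diameter of each new part is $O(R\logn/\bt) = O(R')$ for $c_1$ large. Conditions~2 and~3 follow from steps~6 and~7 (with the additive term in step~7 read at the $R'$ scale, so that any cross-LDD edge has $w'(e) \ge R + R' \ge R'$). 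Conditions~4 and~5 hold because $T'$ adjoins the BFS-tree edges from step~5 to the old in-part spanning trees already contained in $T$, producing a spanning tree of each new part; the same hop count bounds $d_{T'}$ within a new part by $R'$.

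For the expected distance, I fix a shortest $(x,y)$-path $P$ in $G(w)$ and bound $\E[w'(e)]$ edge by edge. Edges with $w(e)=0$ remain at $w'(e)=0$. For a between-part edge $e=(u,v)$ with $w := w(e) \ge R$, the two contracted endpoints lie at weighted distance $\le w/R$ in $H(\f1R w_H)$, so by Lemma~\ref{LDDProb} they fall into different LDD components with probability at most $1 - e^{-O(w\bt/R)}$. A two-regime case analysis---using $1-e^{-x}\le x$ when $w\le R/\bt$ and the trivial bound $\le 1$ when $w > R/\bt$---combined with the worst-case new weight $w + (c_1/\bt)R\logn$, yields $\E[w'(e)] = O(w\logn)$. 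Summing over $e\in P$ and applying linearity gives $\E[d_{G(w')}(x,y)] \le c_2\logn \cdot d_{G(w)}(x,y)$.

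The main obstacle is the two-regime analysis for $\E[w'(e)]$. Heavy between-part edges ($w \gg R/\bt$) are almost surely separated by the LDD, but then the full additive increment $\Th(R\logn/\bt)$ is dominated by $w$ itself, so the per-edge stretch is still $O(\logn)$. Light between-part edges ($w\approx R$) are separated only with probability $O(w\bt/R)$, and this probability must cancel the otherwise-large increment exactly tightly; no slack is available. Getting both regimes to collapse to the same $O(\logn)$ stretch under one constant $c_2$ is the delicate step, since this distortion compounds across all iterations of {\tt LDDSubroutine} and ultimately drives the approximation ratio of Theorem~\ref{SSSP}.
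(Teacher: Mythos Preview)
Your proof is correct and follows essentially the same approach as the paper: verify invariants~2,~3,~5 from the construction, bound the new in-part diameter by counting $O(\tfrac1\bt\log n)$ hops through old parts of $G$-diameter $\le R$ each plus the BFS-edge contribution, and bound the expected stretch edge-by-edge along a shortest path using Lemma~\ref{LDDProb} and linearity. Two minor remarks: the paper proves invariant~4 first and notes invariant~1 is an immediate consequence (since $d_G\le d_{T'}$), which saves repeating the hop-count argument; more importantly, the two-regime case split you call ``the main obstacle'' is unnecessary---the inequality $1-e^{-x}\le x$ holds for \emph{all} $x\ge0$, so $\Pr[\text{cut}]\cdot\tfrac{c_1}\bt R\log n \le O(\tfrac{w\bt}R)\cdot\tfrac{c_1}\bt R\log n = O(w\log n)$ in one line, even when $w\bt/R>1$ (the probability bound is then vacuous, but the \emph{product} bound remains valid). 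The paper uses exactly this one-step argument; there is nothing delicate here.
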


\begin{proof}
Note that the following properties of the invariant follow immediately:
\be[itemsep=-1mm]
\im[2.]Every edge within a part $C_i'$ has weight 0 in $w'$.
\im[3.]Every edge between two parts $C_i',C_j'$ has weight at least $(\f{c_1}\bt\logn)R$ in $w'$.
\im[5.]$T'$ has a spanning tree within each part $C_i'$, and no edges in between parts.

\ee

To prove invariant (4), suppose that $x,y\in V$ are in the same $C_i'$. If they are also in the same $C_i$, then the property holds by the input guarantee. Otherwise, by Lemma~\ref{LDDDiam}, w.h.p. the parts containing $x$ and $y$ have distance $O(\f{1}\bt\logn)$ in the BFS tree on $H(\f1Rw_H)$, which means that there is a path in the BFS tree that travels through  $O(\f{1}\bt\logn)$ vertices in $H(\f1Rw_H)$. We consider the distance  through edges in $H(\f1Rw_H)$ and through vertices in $H(\f1Rw_H)$ (which are actually parts in $G$) separately. For the edges, the distance  is at most $O(\f1\bt\logn)R$ in $H$, and each of these edges has weight at least that in $G$, giving $O(\f1\bt\logn)R$ total distance. For the vertices, traversing through $T$ inside the $O(\f1\bt\logn)$ parts takes $O(R)$ distance each, by the input guarantee, and $O(\f1\bt\logn)R$ distance overall. Combining the two arguments proves (4) $d_{T'}(x,y)\le (\f{c_1}\bt\logn)R$. Note that (4) immediately implies that (1) the weighted diameter of each part $C_i'$ using edge weights in $G$ is  at most $(\f{c_1}\bt\logn)R$.

Finally, we prove that $\E[d_{G(w')}(x,y)]\le (c_2\logn)d_{G(w)}(x,y)$. If $x,y\in V$ are in the same $C_i'$, then their distance in $G(w')$ is zero and the claim follows. Otherwise, consider the shortest path in $H$, which is also the shortest path in $H(\f1Rw_H)$. By Lemma \ref{LDDProb}, every edge $e$ on this path has probability at most $1-e^{-O(w_e\bt)}=O(w(e)\bt)$ of being cut between two components, so the expected length is at most $O(w(e)\bt)\cd \f{c_1}\bt\logn=O(w(e)\logn)$. By linearity of expectation, the expected multiplicative increase of the path in $H(\f1Rw_H)$, and also in $G(w')$, is $O(\logn)$.

\end{proof}

\subsection{Algorithm Main Loop}

In this section, we apply {\tt{LDDSubroutine}} recursively with geometrically increasing values of $R$. We show that the resulting forest approximates distances in expectation.

\begin{mdframed}[roundcorner=4pt, backgroundcolor=white]
        \textbf{Algorithm $T={\tt{ExpectedSPForest}}(G,\bt,R_0)$}

Input:

\bi[itemsep=-1mm]
\im $G=(V,E)$, the network graph with edge weights in $[1,\poly(n)]$.
\im $\bt=(\logn)^{-\Om(1)}$, freely chosen.
\ei

Algorithm:
\be[itemsep=-1mm]
\im Initially, set
$R^{(0)}:=1$,
$T^{(0)}:=\emptyset$,
and $w^{(0)}$ to have the same edge weights as $G$.
\im For $t=1,2,\lds$, while $R<n^c$ for large enough $c$:
 \be
 \im $(w^{(t)},T^{(t)}):={\tt LDDSubroutine}(w^{(t-1)},T^{(t-1)},\bt,R^{(t-1)})$.
 \im Set $R^{(t)}:=(\f{c_1}\bt\logn)R^{(t-1)}$.
 \ee
\im Output the forest obtained on the last iteration.
\ee

\end{mdframed}

Note that  $T$ is not guaranteed to be a tree at the end of the algorithm, so distances within $T$ can be infinite. However, a simple induction with linearity of expectation shows that the expected increase in length behaves in a controlled way:

\begin{lemma}\label{TthStep}
Let $G$ be a network graph with edge weights in $ [1,\poly(n)]$, and let $\bt:= (\logn)^{-\Om(1)}$. On the $t$\textsuperscript{th} iteration of {\tt{ExpectedSPForest}}, for any two vertices $x,y\in V$, $\E[d_{G(w^{(t)})}(x,y)]\le (c_2\log n)^t d_G(x,y)$.
\end{lemma}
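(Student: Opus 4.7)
The proof will proceed by straightforward induction on $t$, using Lemma~\ref{SubProps} as the per-step building block together with the tower property of conditional expectation. Before invoking Lemma~\ref{SubProps} at each step, however, we must verify that its hypothesis — the subroutine invariant with parameter $R^{(t-1)}$ — holds entering iteration $t$; this requires a secondary induction using the first bullet of Lemma~\ref{SubProps}.

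\textbf{Base case ($t=0$).} By initialization, $w^{(0)}$ has the same edge weights as $G$, so $d_{G(w^{(0)})}(x,y)=d_G(x,y)$, and $(c_2\log n)^0 = 1$, giving the inequality with equality. Also, since $T^{(0)}=\emptyset$ and every edge weight in $w^{(0)}$ is at least $1 = R^{(0)}$, the graph $G_0(w^{(0)})$ has no edges and its connected components are singletons; all five conditions of the subroutine invariant with $R = R^{(0)} = 1$ are then trivially satisfied.

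\textbf{Inductive step.} Assume that after iteration $t-1$ we have both (a) $(w^{(t-1)},T^{(t-1)})$ satisfies the subroutine invariant with parameter $R^{(t-1)}$, and (b) $\E[d_{G(w^{(t-1)})}(x,y)] \le (c_2\log n)^{t-1}\, d_G(x,y)$ for every $x,y\in V$. The first bullet of Lemma~\ref{SubProps}, applied w.h.p. at iteration $t$, promotes (a) to the invariant with parameter $(\tfrac{c_1}{\bt}\log n)\,R^{(t-1)} = R^{(t)}$, so the invariant is preserved for the next iteration (a standard union bound over the polynomially many iterations absorbs the w.h.p. losses). For the expected-distance bound, apply the second bullet of Lemma~\ref{SubProps} conditioned on the realization of $w^{(t-1)}$: since LDDSubroutine only uses fresh randomness at iteration $t$,
\[
\E\!\left[d_{G(w^{(t)})}(x,y)\,\middle|\,w^{(t-1)}\right] \;\le\; (c_2\log n)\, d_{G(w^{(t-1)})}(x,y).
\]
Taking outer expectation and applying the inductive hypothesis (b) yields
\[
\E\!\left[d_{G(w^{(t)})}(x,y)\right] \;\le\; (c_2\log n)\,\E\!\left[d_{G(w^{(t-1)})}(x,y)\right] \;\le\; (c_2\log n)^{t}\, d_G(x,y),
\]
which completes the induction.

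\textbf{Main obstacle.} The content of the lemma is essentially a clean inductive packaging of Lemma~\ref{SubProps}, so the only delicate point is the probabilistic bookkeeping: the second bullet of Lemma~\ref{SubProps} is a statement about a single LDD call, whereas $w^{(t-1)}$ is itself random, accumulating the randomness of all prior LDD calls. Handling this requires conditioning on $w^{(t-1)}$ and appealing to the tower property, and separately handling the w.h.p.\ invariant-preservation events (the first bullet of Lemma~\ref{SubProps}) via a union bound so that the invariant — and hence the applicability of Lemma~\ref{SubProps} — is maintained throughout all $O(\log_{1/\bt} n^c)$ iterations of the main loop.
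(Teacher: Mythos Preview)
Your proposal is correct and follows precisely the approach the paper indicates; the paper itself gives no proof beyond the one-line remark ``a simple induction with linearity of expectation shows that the expected increase in length behaves in a controlled way.'' Your write-up is in fact more careful than that sketch, since you explicitly carry the subroutine invariant forward via the first bullet of Lemma~\ref{SubProps} and make the conditioning (tower property) step explicit rather than leaving it implicit under ``linearity of expectation.''
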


We now show that we get approximate shortest paths with constant probability.

\begin{lemma}\label{ConstProbSP}
        Let $G$ be a network graph with edge weights in $ [1,\poly(n)]$, and let $\bt:= (\logn)^{-\Om(1)}$. The algorithm {\tt{ExpectedSPForest}} runs in  $\tilde{O}(\f1\bt Q_G)$ rounds. Consider the output forest $T$, and fix any two vertices $x,y\in V$. Then, $d_T(x,y)\ge d_G(x,y)$ always\footnote{In particular, $d_T(x,y)=\infty$ if $x$ and $y$ are not in the same connected component in $T$}, and with constant probability, $d_{T}(x,y)\le O(\f1\bt d_G(x,y)^{O(\log\logn)/\log(1/\bt)})\cd d_G(x,y)$.
\end{lemma}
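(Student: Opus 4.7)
The plan is to prove the three claims separately. For the running time, I would observe that by Lemma~\ref{LDDDiam} each invocation of the LDD runs in $O(\f1\bt\logn)$ rounds in the contracted graph, and by Lemma~\ref{LDDSim} this simulates in $\tilde O(\f1\bt Q_G)$ rounds in $G$, dominating the cost of every other step inside {\tt LDDSubroutine}. Since $R^{(t)} = (\f{c_1\logn}\bt)^t$ starting from $R^{(0)}=1$, the termination condition $R^{(t)} \ge n^c$ is reached after $O(\logn/\log(\logn/\bt)) = O(\logn/\log(1/\bt))$ iterations, using $\bt=(\logn)^{-\Om(1)}$. Multiplying gives total running time $\tilde O(\f1\bt Q_G)$. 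The lower bound $d_T(x,y)\ge d_G(x,y)$ is immediate because every edge added to $T$ keeps its original weight from $G$, so $T$ is a subgraph of $G$ with unchanged edge weights and every $T$-path is also a $G$-path.

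For the upper bound, set $d:=d_G(x,y)$ and choose the first index $t^*$ such that $R^{(t^*)} \ge C\bt(c_2\logn)^{t^*}d$ for a suitably large constant $C$. Solving the geometric inequality $(c_1/(c_2\bt))^{t^*} \ge C\bt d$ yields $t^* = O(1) + \log d/\log(1/\bt)$; since edge weights are at most $\poly(n)$ we have $d\le n^{O(1)}$, so $t^*+1$ lies within the valid range of the main loop. By Lemma~\ref{TthStep} applied to $x,y$ and Markov's inequality, with probability at least $1/2$ the inequality $d_{G(w^{(t^*)})}(x,y) \le 2(c_2\logn)^{t^*} d$ holds; call this event $\mc E_1$.

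Conditional on $\mc E_1$, consider iteration $t^*+1$, which executes an LDD on $H^{(t^*)}(\f1{R^{(t^*)}} w_H^{(t^*)})$. Contracting the zero-weight components into vertices of $H^{(t^*)}$ does not increase pairwise distances, so the weighted distance in this scaled graph between the parts containing $x$ and $y$ is at most $2(c_2\logn)^{t^*}d/R^{(t^*)} \le 2/(C\bt)$. By Lemma~\ref{LDDProb} the two parts land in the same LDD component with probability at least $e^{-O(1/C)}$, which is a constant (choosing $C$ large enough); call this event $\mc E_2$. When $\mc E_1\cap\mc E_2$ holds, $x$ and $y$ lie in the same part of $G_0(w^{(t^*+1)})$, so by invariant~(4) of Lemma~\ref{SubProps}, $d_{T^{(t^*+1)}}(x,y) \le R^{(t^*+1)}$, and hence $d_T(x,y) \le R^{(t^*+1)}$ because edges are only ever added to the forest. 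Finally, substituting $t^* = O(\log d/\log(1/\bt))$ into $R^{(t^*+1)} = (\f{c_1\logn}\bt)^{t^*+1}$ and using the identities $(1/\bt)^{\log d/\log(1/\bt)} = d$ and $(\logn)^{\log d/\log(1/\bt)} = d^{\log\logn/\log(1/\bt)}$ gives the claimed form $O(\f1\bt)\cdot d\cdot d^{O(\log\logn/\log(1/\bt))}$.

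The main obstacle I expect is the clean composition of the two independent probability bounds (the Markov bound on stretch, and the no-cut probability of the LDD at the chosen scale) together with the exponent arithmetic that turns a $(\logn/\bt)^{t^*+1}$ into the stated $\f1\bt\cdot d^{1+O(\log\logn/\log(1/\bt))}$; the assumption $\bt=(\logn)^{-\Om(1)}$ is what lets residual $\logn$ factors get absorbed into the polynomial-in-$d$ exponent.
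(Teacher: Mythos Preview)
Your proof is correct, but it takes a more roundabout route than the paper. The paper's argument needs only \emph{one} probabilistic event: at the first iteration $t$ with $R^{(t)}\ge 2(c_2\log n)^t d_G(x,y)$, Markov's inequality on Lemma~\ref{TthStep} gives $d_{G(w^{(t)})}(x,y)<R^{(t)}$ with probability $\ge 1/2$. The key observation you are missing is that this \emph{deterministically} forces $x$ and $y$ into the same part at iteration $t$, because invariant~(3) of Lemma~\ref{SubProps} says every inter-part edge has $w^{(t)}$-weight at least $R^{(t)}$, so any path between different parts has length $\ge R^{(t)}$. Invariant~(4) then yields $d_T(x,y)\le R^{(t)}$ immediately. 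You instead insert an extra stage: you only ask that the scaled distance in $H^{(t^*)}$ be $O(1/\beta)$, and then invoke Lemma~\ref{LDDProb} on the fresh LDD at iteration $t^*+1$ to land $x,y$ in the same component with a second constant probability. This is valid (the LDD at iteration $t^*+1$ uses independent randomness, and the distance bound conditioned on $\mc E_1$ is exactly what Lemma~\ref{LDDProb} needs), but it costs you an extra multiplicative $(\frac{c_1\log n}{\beta})$ in the final bound (from using $R^{(t^*+1)}$ rather than $R^{(t^*)}$) and an extra constant in the success probability, both harmless for the stated asymptotics. The exponent arithmetic and the verification that $t^*+1$ lies within the loop range match the paper's.
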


\begin{proof}
For the running time, there are $O(\f{\logn}{\log(1/\bt)})$ iterations of the LDD, each of which takes $\tilde{O}(Q_G)$ time.

For simpler notation, define $M:=d_G(x,y)$. Since every edge added to $T$ has weight at least the weight of that same edge in $G$, we clearly have $d_T(x,y)\ge M$. To prove the other bound on $d_T(x,y)$, consider any iteration $t$ such that $R^{(t)}\ge2(c_2\logn)^tM$. (We later argue that such an iteration $t$ must exist.) By Lemma \ref{TthStep} and Markov's inequality, $d_{\tilde{G}^{(t)}}(x,y)<R^{(t)}$ with probability at least $\f12$. If this occurs, then $x$ and $y$ cannot belong to different parts at iteration $t$, since the distance between parts is at least $R^{(t)}$. By the subroutine guarantee, $d_{T^{(t)}}(x,y) = O(\f1\bt\logn)R^{(t-1)}=O(R^{(t)})$, and since the edges of $T^{(t)}$ are preserved for the rest of the algorithm, $d_T(x,y)=O(R^{(t)})$ as well. Therefore, for this value of $t$, the approximation factor is $2(c_2\logn)^t$ with probability at least $\f12$. 

It remains to find the smallest satisfying $t$. The condition on $t$ is equivalent to $(\f{c_1}\bt\logn)^t\ge2(c_2\logn)^t M$, or $t\ge \lc \f{\log(2M)}{\log(c_1/c_2)+\log(1/\bt)}\rc$. For $t$ achieving equality, we get
$$R^{(t)}= \lp \f{c_1}\bt\logn \rp^t \le  \lp \f{c_1}\bt\logn \rp^{\f{\log(2M)}{\log(c_1/c_2)+\log(1/\bt)} + 1} = O \lp \f1\bt (2M)^{1+\f{O(\log\logn)}{\log(1/\bt)}} \logn\rp,$$
as desired.

Lastly, we show that such an iteration $t$ must exist. In particular, we show that the value of $t$ chosen above satisfies $R^{(t)}\le n^c$ for some large enough constant $c$ in the algorithm. Since $M=\poly(n)$ and $R=1/\poly(n)$, we have
$$ t= \left\lceil \f{\log(2M)}{\log(c_1/c_2)+\log(1/\bt)}\right\rceil  = O\lp\f{\logn}{\log(1/\bt)}\rp .$$
Therefore,
$$ R^{(t)}=\lp \f{c_1}\bt\logn\rp^tR_0=\lp \f{\logn}{\bt}\rp^{O\lp\f{\logn}{\log(1/\bt)}\rp}= \lp\f1\bt\rp^{O\lp\f{\logn}{\log(1/\bt)}\rp} \cd (\logn)^{O\lp\f{\logn}{\log(1/\bt)}\rp} =n^{O(1)}\cd n^{O(1)},$$
where the last equality uses the fact that $\bt=(\logn)^{-\Om(1)}\implies \log(1/\bt)=\Om(\log\logn)$. Therefore, $R^{(t)}\le n^c$ for large enough $c$.
\end{proof}


From the shortest path forest, we can also derive the distances to each vertex $v$ from a specified source $s$. Below is the algorithm, which runs in $\tilde O(\f1\bt Q_G)$ rounds.

\begin{mdframed}[roundcorner=4pt, backgroundcolor=white]
        \textbf{Algorithm ${\tt{ExpectedSPDistance}}(G,\bt,s)$}

\be[itemsep=-1mm]
\im Run ${\tt ExpectedSPForest}(G,\bt)$ to obtain forest $T$. Set $\tilde T$ to be the connected component of $T$ that contains the source $s$.
\im For all vertices $v\notin \tilde T$, set $d(s,v):=\infty$.
        \im Run {\tt AggregatePathToRoot} (see Appendix~\ref{sec:tree} of the full version) with $x_v=1$ for all $v\in\tilde T$ to determine the depth of each vertex in the tree $\tilde T$ rooted at $s$.
\im Every vertex $v\in \tilde T\bs\{s\}$ computes its parent in the rooted tree, which it can determine by finding the one neighbor with smaller depth.
\im For each $v\in\tilde T\bs\{s\}$, set $x_v$ to be the weight of the edge to its parent, and set $x_s:=0$. Run {\tt AggregatePathToRoot} on these values to determine $d(s,v)$ for $v\in T$.
\ee

\end{mdframed}

\section{Solving SSSP and Related Problems}
\subsection{SSSP Trees}
In this section, we describe an algorithm that outputs an approximate single source shortest path tree with source $s$. At a high level, to boost the probability that distances are well-approximated, we construct many randomized trees and take a collective ``best'' tree.

\begin{mdframed}[roundcorner=4pt, backgroundcolor=white]
        \textbf{Algorithm ${\tt{SSSPTree}}(G,\bt,s)$}
\be[itemsep=-1mm]
\im Repeat  ${\tt ExpectedSPDistance}(G,\bt,s)$ $\Th(\logn)$ times to obtain distances $d_{T_i}(v):=d_{T_i}(s,v)$.
\im For each vertex $v$, set $d_{\min}(v):=\min_id_{T_i}(v)$.
\im For each vertex $v$ except the source, connect an edge to some neighbor $u$ that satisfies $d_{\min}(u)+w_{(u,v)}\le d_{\min}(v)$. Return the tree $T^*$ of all such edges.
\ee
\end{mdframed}

\begin{lemma} \label{SSSPBeta}
Let $G$ be a network graph with edge weights in $ [1,\poly(n)]$, and let $\bt:= (\logn)^{-\Om(1)}$. W.h.p., {\tt SSSPTree} runs  for  $\tilde{O}(\f1\bt Q_G)$ rounds and outputs a shortest path tree that $O(\f1\bt d_G(v)^{\f{O(\log\logn)}{\log(1/\bt)}}\logn)$-approximates distances from the source to each $v$.
\end{lemma}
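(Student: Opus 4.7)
The plan is to prove the running time and correctness separately, with the main work being to show that $T^*$ is a legitimate tree with the claimed approximation bound.

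For the running time, each of the $\Theta(\log n)$ calls to {\tt ExpectedSPDistance} takes $\tilde O(\tfrac1\bt Q_G)$ rounds by Lemma \ref{ConstProbSP}, and the extra logarithmic factor is absorbed by $\tilde O$. Computing $d_{\min}(v)$ is purely local at $v$, and selecting the parent in Step~3 requires one round in which each vertex tells its neighbors its value $d_{\min}$. So the total is $\tilde O(\tfrac1\bt Q_G)$.

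For correctness, first I would boost the single-tree guarantee. By Lemma \ref{ConstProbSP}, a single call produces $d_{T_i}(s,v) \le A(v) \cdot d_G(s,v)$ with constant probability, where $A(v) = O(\tfrac1\bt d_G(s,v)^{O(\log\log n)/\log(1/\bt)}\log n)$, and always $d_{T_i}(s,v) \ge d_G(s,v)$. Running $\Theta(\log n)$ independent iterations and union-bounding over the $n$ vertices ensures that w.h.p.\ for every $v$ at least one trial achieves the bound, so $d_G(s,v) \le d_{\min}(v) \le A(v)\cdot d_G(s,v)$. In particular $d_{\min}(v)$ is finite everywhere.

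Next I would show that $T^*$ is a spanning tree of $s$'s component and that $d_{T^*}(s,v) \le d_{\min}(v)$. For each $v\neq s$, let $i^\star$ index the trial attaining $d_{\min}(v) = d_{T_{i^\star}}(s,v)$, and let $u$ be the parent of $v$ in $T_{i^\star}$ rooted at $s$. Then
\[
d_{\min}(u) \;\le\; d_{T_{i^\star}}(s,u) \;=\; d_{T_{i^\star}}(s,v) - w_{(u,v)} \;=\; d_{\min}(v) - w_{(u,v)},
\]
so a valid parent exists and the algorithm assigns some such $u$ to $v$. Since edge weights are at least $1$, the relation $d_{\min}(u) + w_{(u,v)} \le d_{\min}(v)$ means $d_{\min}$ strictly decreases along parent pointers, ruling out cycles; together with $d_{\min}(s)=0$ this shows $T^*$ is a tree rooted at $s$. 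Telescoping along the $T^*$-path $v=v_0, v_1,\dots, v_k = s$ gives
\[
d_{T^*}(s,v) \;=\; \sum_{j=0}^{k-1} w_{(v_j,v_{j+1})} \;\le\; \sum_{j=0}^{k-1}\bigl(d_{\min}(v_j) - d_{\min}(v_{j+1})\bigr) \;=\; d_{\min}(v).
\]
Combined with the boosted bound $d_{\min}(v) \le A(v)\cdot d_G(s,v)$, this yields the claimed approximation factor. The main conceptual point—which I expect to be the only nontrivial step—is that the existence of a valid parent for each $v$ is not obvious from the condition $d_{\min}(u) + w_{(u,v)} \le d_{\min}(v)$ alone, but is enforced by looking at the parent inside the particular tree $T_{i^\star}$ that realizes $d_{\min}(v)$; once this is observed, the rest is standard.
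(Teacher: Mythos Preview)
Your proposal is correct and follows essentially the same approach as the paper: existence of a valid parent is argued by looking at the parent inside the tree $T_{i^\star}$ realizing $d_{\min}(v)$, the bound $d_{T^*}(s,v)\le d_{\min}(v)$ is obtained by telescoping along the $T^*$-path, and the constant-probability guarantee of Lemma~\ref{ConstProbSP} is boosted to w.h.p.\ via $\Theta(\log n)$ independent repetitions and a union bound. If anything, your write-up is slightly more careful than the paper's, since you explicitly verify that $T^*$ is acyclic (via the strict decrease of $d_{\min}$ along parent pointers), whereas the paper leaves this implicit.
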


\begin{proof}
Observe that in step 3 of {\tt SSSPTree}, such a neighbor always exists, since in the tree $T_i$ that achieves distance $d_{\min}(v)$ to $v$, the parent $u$ of $v$ in $T_i$ satisfies $d_{\min}(u)+w_{(u,v)}=d_{\min}(v)$. To show that $d_{T^*}(v)\le d(v)$ for each $v$, consider the path  $s=v_0,v_1,v_2,\lds,v_\el=v$ in $T^*$. We have $w(v_i,v_{i-1})\le d_{\min}(v_i)-d_{\min}(v_{i-1})$ for each $i$, and summing up the inequalities gives the result.

From Lemma \ref{ConstProbSP}, each vertex $v$ achieves the desired approximation with constant probability. By taking the minimum $d_{T_i}(v)$ over $\Th(\logn)$ trees, this approximation is satisfied w.h.p. for every $v$, giving $d_{T^*}(v)\le d_{\min}(v)=O(\f1\bt d_G(v)^{1+\f{O(\log\logn)}{\log(1/\bt)}}\logn)\cd d_G(v)$.
\end{proof}

This concludes Theorem~\ref{SSSP}, restated below.


\SSSP*

\subsection{Distance Labeling Schemes}

We restate our main result on approximate distance labeling schemes.

\DistLabel*

\begin{proof}
For each $t$ from $1$ to $\lc \log(c_1/c_2)+\log(1/\bt) \rc$, run  {\tt{ExpectedSPForest}} $\Th(\logn)$ times with $R_0:=2^{-t}$. By analysis from Lemma \ref{ConstProbSP} and Theorem \ref{SSSP}, w.h.p., for every $x,y\in V$, there is an iteration of   {\tt{ExpectedSPForest}} with  $R=O(d_G(v)^{1+\f{O(\log\logn)}{\log(1/\bt)}})$ that outputs a cluster containing both $x$ and $y$. The total number of rounds is $\tilde{O}(\f1\bt Q_G)$.

In each of the $O(\log^2n)$ iterations of  {\tt{ExpectedSPForest}}, consider all of the clusters formed throughout the algorithm, and give each one a unique ID. For every iteration with parameter $R$ and a cluster formed in that iteration, assign  to every vertex within the cluster the label $(\text{ID}, R)$.  Each vertex is assigned to $O(\f\logn{\log(1/\bt)})$ clusters per  {\tt{ExpectedSPForest}}, so the label size is $\polylog(n)$.

To compute distances given two vertices $x,y\in V$, simply output the minimum possible $R$ over all clusters that contain both $x$ and $y$, which is easily computed with the labels of $x$ and $y$. By the analysis above, the minimum possible $R$ gives the desired approximation factor $O(d_G(v)^{{O(\log\logn)}/{\log(1/\bt)}})=O(n^{{O(\log\logn)}/{\log(1/\bt)}})$. 
\end{proof}

\subsection{Transshipment Problem
}

Let $G$ be a transshipment network with demand $d_v$ at each node $v$. The following algorithm computes an approximate transshipment flow in expectation.

\begin{mdframed}[roundcorner=4pt, backgroundcolor=white]
        \textbf{Algorithm {\tt{ExpectedTS}}}
\be[itemsep=-1mm]
\im Run  {\tt ExpectedSPForest} and root the tree $T$ arbitrarily.
        \im Using  {\tt AggregateSubtree} (see Appendix~\ref{sec:tree}), compute $F(v):=\Sum_{u\in S_v}d_v$ for all $v$, where $S_v$ is the subtree rooted at $v$.
\im For each edge $(v,p)\in T$ with $p$ the parent of $v$ in the rooted tree, direct $F(v)$ flow from $v$ to $p$. (If $F(v)$ is negative, then direct the flow the other way.)
\ee
\end{mdframed}

\begin{lemma} \label{ExpectedTSCost}
Let $G$ be a network graph with edge weights in $ [1,\poly(n)]$ and demands that sum to zero, and let $\bt:= (\logn)^{-\Om(1)}$. The expected total cost of {\tt{ExpectedTS}} is within $\tilde{O}(\f1\bt n^{O(\log\logn)/\log(1/\bt)})$ of optimum.
\end{lemma}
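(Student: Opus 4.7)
The plan is to reduce the expected cost of the tree-routed flow produced by {\tt ExpectedTS} to the expected pairwise stretch $\mathbb{E}[d_T(x,y)]/d_G(x,y)$ of the random tree $T$ returned by {\tt ExpectedSPForest}. First I would recall that on any spanning tree $T$, the unique flow realizing a zero-sum demand vector $d$ has cost equal to the tree Wasserstein-$1$ distance $W_1^T(d^+,d^-)$; since any coupling of $d^+$ and $d^-$ witnesses an upper bound on $W_1^T$, plugging in the coupling $f^*$ attaining $\text{OPT}=\sum_{u,v} f^*_{uv}\, d_G(u,v)$ gives $\text{cost}({\tt ExpectedTS})\le \sum_{u,v} f^*_{uv}\, d_T(u,v)$. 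Taking expectations, $\mathbb{E}[\text{cost}]\le \sum_{u,v} f^*_{uv}\, \mathbb{E}[d_T(u,v)]$, so it is enough to prove the per-pair bound $\mathbb{E}[d_T(x,y)]\le \alpha\cdot d_G(x,y)$ with $\alpha=\tilde O\bigl(\tfrac{1}{\beta}\bigr)\, n^{O(\log\log n)/\log(1/\beta)}$.

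Next I would open up the iterative structure of {\tt ExpectedSPForest} to bound this per-pair stretch. Fix $(x,y)$ with $M:=d_G(x,y)$ and let $t^*$ be the first iteration at which $x$ and $y$ lie in the same part. Lemma~\ref{SubProps} delivers $d_T(x,y)\le R^{(t^*)}$, while invariant~(3) forces any between-part edge to have weight at least $R^{(t)}$, so $\{t^*>t\}\subseteq\{d_{G(w^{(t)})}(x,y)\ge R^{(t)}\}$. Combining with the expectation bound $\mathbb{E}[d_{G(w^{(t)})}(x,y)]\le (c_2\log n)^t M$ of Lemma~\ref{TthStep} and Markov's inequality gives $\Pr[t^*>t]\le (c_2\log n)^t M / R^{(t)}=(c_2\beta/c_1)^t M$.

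I would then write $\mathbb{E}[d_T(x,y)]\le \sum_{t\le T_{\max}} R^{(t)}\Pr[t^*\ge t]$, where $T_{\max}=O(\log n/\log(1/\beta))$ is the number of outer iterations, and split the sum at the threshold $t^{**}=\lceil \log(2M)/(\log(c_1/c_2)+\log(1/\beta))\rceil$ where the tail probability first drops below $1$. The contribution from $t\le t^{**}$ sums geometrically up to $R^{(t^{**})}=\tilde O(\tfrac{1}{\beta})\, M^{1+O(\log\log n)/\log(1/\beta)}$, recovering exactly the estimate used in the proof of Lemma~\ref{ConstProbSP}. For $t>t^{**}$, each term simplifies via the algebraic identity $R^{(t)}(c_2\beta/c_1)^t=(c_2\log n)^t$, giving a geometric series dominated by its last term at $t=T_{\max}$, of magnitude $M\cdot \tilde O(\tfrac{1}{\beta})\, n^{O(\log\log n)/\log(1/\beta)}$. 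Since $M\le \poly(n)$ both pieces are absorbed into $\alpha\cdot M$, and combined with the first paragraph this yields $\mathbb{E}[\text{cost}]\le \alpha\cdot\text{OPT}$.

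The main obstacle is that Lemma~\ref{ConstProbSP} only guarantees the stretch bound with \emph{constant} probability, whereas for transshipment we need control over the full expectation, and a careless Markov bound on $d_T(x,y)$ makes the large-$t$ tail grow like $(c_2\log n)^t$. The resolution is to rely on the deterministic truncation of the outer loop at $R^{(T_{\max})}>n^c$: this caps the above sum at $t\le T_{\max}$, and by picking $c$ sufficiently large the residual failure probability $\Pr[t^*>T_{\max}]$ is inverse-polynomial, so even if the algorithm falls back to the trivial $\poly(n)\cdot\text{OPT}$ bound whenever $T$ fails to be spanning, that contribution is absorbed into the $\tilde O(\cdot)$ factor.
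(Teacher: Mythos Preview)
Your high-level structure matches the paper's exactly: decompose the optimal transshipment into source--sink paths, observe that the unique tree flow is no more expensive than routing each such path through $T$, and then appeal to linearity of expectation together with a per-pair stretch bound $\E[d_T(x,y)]\le\alpha\,d_G(x,y)$. The paper's version of the tree-routing step is phrased via the cut argument (the flow across any tree edge equals the signed demand on one side), which is equivalent to your Wasserstein-$1$ formulation.

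Where you diverge is in justifying the per-pair expectation bound. The paper simply asserts $\E[d_T(s,t)]=\tilde O(\tfrac1\bt\,n^{O(\log\log n)/\log(1/\bt)})\cdot d_G(s,t)$ ``by Lemma~\ref{ConstProbSP},'' even though that lemma literally gives only a constant-probability guarantee. You correctly flag this and supply the missing argument: summing $R^{(t)}\Pr[t^*\!\ge t]$ over levels, splitting at $t^{**}$, and using the deterministic truncation at $T_{\max}=O(\log n/\log(1/\bt))$ to control the otherwise-divergent $(c_2\log n)^t$ tail. Your computation that the last surviving term is $M\cdot\tilde O(1/\bt)\cdot n^{O(\log\log n/\log(1/\bt))}$ is right, as is the observation that $\Pr[t^*>T_{\max}]$ can be driven inverse-polynomially small by choosing the termination constant $c$ large enough, so the fallback event contributes negligibly. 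In short: same proof skeleton, but you have filled in a step the paper leaves implicit.
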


\begin{proof}
Decompose the optimal solution into a set of (shortest) paths. For a path from $s$ to $t$, we have $\E[d_T(s,t)]=\tilde O(\f1\bt n^{O(\log\logn)/\log(1/\bt)})\cd d_G(s,t)$ by Lemma \ref{ConstProbSP}, and by linearity of expectation, the cost $C$ of routing each of these paths through $T$ gives an expected $\tilde O(\f1\bt n^{O(\log\logn)/\log(1/\bt)})$ approximation. It remains to show that the total cost of {\tt{ExpectedTS}} is at most $C$. If {\tt{ExpectedTS}} places $F(e)$ flow along an edge $e$, then the total demand difference between the two halves of the tree split at $e$ is $|2F|$. Therefore, any sequence of paths along $T$ that satisfies all demands must route at least $|F|$ flow along edge $e$. It follows that $C$ must be at least the cost of {\tt{ExpectedTS}}.
\end{proof}
By running {\tt ExpectedTS} repeatedly and taking the overall best flow, we obtain our main result for transshipment.

\TS*

\begin{proof}
Run  {\tt ExpectedTS} $\Th(\logn)$ many times and output the minimum total cost. By Markov's inequality and Lemma \ref{ExpectedTSCost}, w.h.p., some iteration achieves within twice the expected approximation of $\tilde O(\f1\bt n^{O(\log\logn)/\log(1/\bt)})$.
\end{proof}

\section{Conclusion and Future Work}

Using the shortcuts framework from \cite{ghaffari2016distributed,haeupler2016low}, we give the first nontrivial approximation algorithms for shortest path problems which run in $o(\sr n+D)$ time on non-pathological network topologies. Our algorithms feature a tuneable parameter $\bt$ that represents the balance between approximation ratio and running time. For certain values of $\bt$, we obtain polylogarithmic-approximate solutions in $\tilde O(n^\e \cdot Q_G)$ rounds for the shortest path and distance labeling problems. While sublogarithmic approximation ratios are known to be impossible (even existentially) for labeling schemes with polylogarithmic labels we believe that our approximation guarantees can likely be improved for nice family of graphs, and, in the case of the SSSP-tree and transshipment problems, even generally. 

In particular, for the quite general set of minor closed families of graphs one might be able to use more sophisticated low-diameter decompositions, such as \cite{abraham2014cops}, which would directly lead to $O(1)$-approximation guarantees for such networks in our framework. However, \cite{abraham2014cops} is written for the sequential setting and making the algorithms in \cite{abraham2014cops} distributed and compatible with the shortcut framework is a nontrivial extension, which we plan to explore for the journal version of this work. 

More importantly, it seems possible that our non-trivial approximation ratios for the SSSP-tree and transshipment problem can be improved all the way to $(1+\epsilon)$-approximations using tools from continuous optimization, such as, gradient descent or the multiplicative weights method. As one example, the recent and brilliant work of Becker et al. \cite{becker2016near} shows how to obtain a $(1+\epsilon)$-approximation for the SSSP-tree problem and the transshipment problem by computing $\tO(\alpha^2)$ many $\alpha$-approximations to the transshipment problem. This work also demonstrates that the required updates to weight and demand vectors can be performed in various non-centralized models, including CONGEST. If this method could be applied to our transshipment algorithm, we could choose $\beta = 2^{-O(\sqrt{\log n \log \log n})}$ to get a $2^{O(\sqrt{\log n \log \log n})}$-approximate solution to the transshipment problem in $Q_G \cdot 2^{O(\sqrt{\log n \log \log n})}$ rounds, which could then be transformed into a $(1+\epsilon)$ approximation with the exact same running time (up to the constant hidden by the $O$-notation). This extension is highly nontrivial as well and left for future work. 
  
 %
%



\newpage
\bibliographystyle{abbrv}
\bibliography{refs}

\appendix

\newpage

\begin{center}\Huge Appendices\end{center}

\section{Low-Congestion Shortcut Framework and Beating $\tODn$}\label{sec:shortcutframework}

This section provides a more detailed explanation of the shortcut framework \cite{ghaffari2016distributed} and the powerful tools it provides, obtaining faster algorithms that are much tighter coupled to the fastest algorithm achievable in a given topology.

\subsection{Beating $\tODn$ and (Instance) Optimality}

In TCS, running times of algorithms are typically measured for worst-case inputs and expressed as asymptotic functions of the instance size, i.e., the number of nodes $n$ for graph problems. 
This approach is also standard in distributed computing when studying running times and complexities of local graph/network problems, such as in various coloring and decomposition problems.
However, many interesting optimization algorithms are non-local and trivially require a running time which is at least as large as the network diameter $D$. This means that in a pathological worst-case topology, such as a line network, $\Omega(n)$ running times are required. Technically, this makes any $O(n)$ round algorithm optimal, in the sense that no better running time, when measured only in terms of $n$, can be achieved for every network. Furthermore obtaining such an ``optimal'' $O(n)$ round algorithms is often essentially trivial for many problems of interest, including the shortest path problem. However, such algorithms are far from satisfactory, and it would have been a great loss to the field of distributed computing if theoreticians would have stopped trying to obtain algorithms that are ``faster'' than the trivial but ``optimal'' $O(n)$ round algorithms. 

\smallskip

In particular, in any given application of distributed computing, it is quite plausible that many instances of a distributed problem need to be solved (e.g., as subroutines) on a given fixed topology. Therefore, it is likely that one might encounter a worst-case instance. The network topology itself, however, is typically fixed and  most likely not of pathological $\Omega(n)$ diameter. The goal is thus to compute a solution on the given topology as fast as possible. The assurance that one should be happy with an optimal running time of $\Theta(n),$ because in some completely different pathological line network topology no better running time can be obtained, is not very helpful and strong given that the  network topology of interest is unlikely to be of this type, and thus typically allows for a drastically faster running time. 

\smallskip

With this in mind,  the distributed computing community is employing a finer way to analyze  non-local distributed optimization algorithms by expressing their running times in terms of $n$ and $D$. The ultimate goal would be to achieve running times of $\tilde{O}(D)$. We remark that the optimality (up to logarithmic factors) of such a complexity is qualitatively very different, due to the trivial $\Omega(D)$ lower bound, which holds not just for some network of diameter $D,$ but for \textit{any} network of diameter $D$. In particular, an $\tilde{O}(D)$ algorithm for a non-local problem is \emph{instance optimal}, i.e, the running time of the algorithm on any given topology is as fast as it can be on \emph{this} (and not some other) topology.

\smallskip

Progress on distributed lower bounds in the last decade, however, has made it very clear that such $\tilde{O}(D)$ algorithms cannot exist in general. In fact, the well-known lower bound framework of \cite{sarma2012distributed} give the wide-ranging and devastating result that even the simplest non-local optimization problem, even if one merely wants so barely-non-trivial approximation guarantees, cannot be obtained in less than $\tilde{O}(\sqrt{n} + D)$ rounds in general. In particular, there exists pathological network topologies with a tiny, say (poly)logarithmic, diameter on which any optimization algorithm requires not just polylogarithic but $\Omega(\sqrt{n})$ many rounds. At the same time, much progress has been made on the algorithmic side as well. Many celebrated results with beautiful and highly sophisticated algorithms now achieve the $\tilde{\Omega}(\sqrt{n} + D)$ running time for many problems, including the SSSP problem. 


\smallskip

Unfortunately, these algorithms are far from being instance optimal, and in fact their running times inherently remain $\tilde{\Theta}(\sqrt{n}+D)$ for \emph{any} topology.
%
%
 Even worse, the structure of the topology given in the lower bound of \cite{sarma2012distributed} is (topologically) very complicated\footnote{One way to make this assertion formal is to note that cannot be embedded into any surface unless it has a huge $\Omega(n)$ genus(!) and it contains a complete bipartite $\sqrt{n} \times \sqrt{n}$ (induced) minor.} and unlikely to occur in practical networks or in other network topologies of interest.
To the contrary, on many classical families of networks that practitioners and/or theoreticians consider interesting, the lower bound does provably not apply~\cite{haeupler2016low,ghaffari2016distributed}. Overall, current distributed shortest path algorithms are optimal in these sense that their running time cannot be improved as a function of $n$ and $D$, because there exists a pathological network topology where one cannot do better. There is, however, the distinct possibility that for networks of interest in which one wants to run these algorithms, much faster running times are possible than the $\Omega(\sqrt{n})$ rounds taken by current algorithms. 
 
\smallskip

This is reminiscent of the problems in using only functions of $n$ as a measure of complexity. However, a fix is not as immediate, given that the $\tilde{\Omega}(\sqrt{n})$ lower bound is much more intricate than the trivial $\Omega(D)$ lower bound. In particular, it is  much less clear what characteristics generally make a topology ``hard'' and how this ``hardness'' can be meaningfully defined, characterized, or parameterized. One way to circumvent this issue is to directly look at classes of network topologies which are (arguably) of interest, such as planar networks, or networks with small tree-width. The problem with this approach, however, is the somewhat limited scope of such a direction. Furthermore, algorithms specifically aimed at a certain class of networks are sometimes not be very robust, in that they might fail completely if the structural assumption is just slightly violated. Given that practical networks will likely not exactly fit into one of the presumed graph classes, such algorithms are less desirable. 


The shortcut framework of \cite{ghaffari2016distributed,haeupler2016low}, which is discussed next, addresses these issues and gives a very general and powerful solution.

\subsection{The Low-Congestion Shortcut Framework}

The Low-Congestion Shortcut Framework of \cite{ghaffari2016distributed,haeupler2016low} was designed to capture the essence of what makes the pathological topology in the $\tOm(\sqn)$ lower bound hard. Shortcuts do this, however, in a way that allows them to be used as a powerful \emph{algorithmic} tool in any topology which does not have such characteristics. The framework furthermore allows one to define a shortcut quality parameter for any topology, which essentially captures how hard or easy it is to route information for this given network. 

\smallskip

Shortcuts are defined with respect to a collection of disjoint vertex subsets $S_1, \ldots, S_N \subseteq V$, each being connected. We call these subsets \emph{parts} and speak of the collection as a \emph{(valid) partition}. We note that not every node in $V$ must be in a part. 

\smallskip

The way one should think of a part is as a distributed sub-problem in which one wants to perform some simple communication. In particular, the goal is it to compute a simple aggregate function within each subset. 

\smallskip

This problem arises naturally in many settings, such as Boruvka's MST algorithm. For another example related to SSSP, consider an SSSP instance in which several edge weights are zero, maybe because they have been rounded down. The zero edges now induce several connected subsets which will be exactly our parts. In a shortest path algorithm all nodes in a part need to have the same SSSP distance. Even if we just want to verify some given SSSP distances approximately each node in such a part must essentially learn whether the minimum SSSP distance assigned to a node in its part is the same (or much different) from its own supposed SSSP distance. 

\smallskip

Of course, it is easy to compute the minimum value in each part by having all nodes in a part flood the minimum value seen so far to all its neighbors, in time equal to the strong diameter of each part. Unfortunately, however, the strong diameter of a subset of vertices can be much much larger than the weak diameter or the diameter of the underlying graph. This necessitates communicating the information of a part via other edges in the graph. However, if too many parts try to communicate their information using the same edge in the graph, they cause congestion on this edge. One can try to minimize congestion by routing information via different possibly slightly longer paths. Depending on the topology, there is then a tradeoff between the longest path along which information of a part is routed, which we call dilation, and the maximum congestion caused along an edge. Shortcuts exactly capture which tradeoffs between dilation and congestion can be achieved for a given topology and valid partition. 

\begin{defn}[Shortcuts] Given a graph $G=(V, E)$ and a valid partition $S_1$, \dots, $S_{N} \subset V$ a \textbf{$c$-congestion $d$-dilation} shortcut specifies a shortcut edge set $E_i \subseteq E$ for each $S_i$ such that:
\begin{itemize} 
\item[(\textbf{1})] For each $i$, the diameter of the subgraph $G[S_i]+E_i$ is at most $d$.  
\item[(\textbf{2})] Each edge $e\in E$ is contained in at most $c$ shortcut edge sets.
\end{itemize}
\end{defn}

It is intuitively clear that if there is a distributed algorithm with round complexity $T$ that achieves communication in every part in parallel then tracing the way the information has flown results in shortcuts of dilation at most $T$ and congestion $\tO(T)$, because information cannot travel faster than one hop per round and even if each part sends merely on bit of information along some edge used by it, at most $T/\log n$ parts can use an edge because the total capacity of an edge in $T$ rounds is at most $T\log n$. In this way the existence of a $\tO(T)$ congestion $T$ dilation shortcut seems essentially a necessary requirement for a fast algorithm solving even the most basic part-wise communication problem to be able to succeed.

\smallskip

The real power of shortcuts, however, is that the opposite direction holds true. In particular, when given a shortcut a simple distributed algorithm computes any simple aggregate function (such as min,  sum, xor, and, etc.) of all nodes in each $S_i$ set in parallel using only $\tO(c+d)$ time:

\begin{lemma}\label{lem:partwisecom}
Suppose we have a $d$-dilation $c$-congestion shortcut for a valid partition $S_1,\lds,S_N$ and suppose each node $v \in V$ has a value $x_v$ of logarithmic bit size. Now, let $\oplus$ be a commutative function. Then, there is a simple $\tilde O(c + d)$ round distributed algorithm which computes at each node $v$ in a part $S$ the value $\displaystyle\bigoplus_{v\in S}x_v$.
\end{lemma}

This essentially shows that whether or not it is possible to solve the part-wise communication problem in $\tO(T)$ rounds depends only on whether or not the topology supports a shortcut with $T = \tO(c+d)$. Hence, it makes sense to define the \emph{quality} $Q$ of a shortcut to be the sum of its congestion and dilation. 

\begin{defn}
The \textbf{quality} $Q=Q(\mathcal E)$ of a shortcut $\mathcal E$ is equal to $c+d$, where $c$ and $d$ are the dilation and congestion of the shortcut, respectively.
\end{defn}

\smallskip

While in general one should not hope for a dilation and thus quality better than the network diameter, we note that an easy argument shows that {\bfseries any valid partitioning in any graph has a shortcut of quality $\mathbf{Q = \sqn + D}$}, where $D$ is the network diameter~\cite{ghaffari2016distributed}: simply give any part consisting of at least $\sqn$ nodes all edges (or the edges of any BFS-tree) in its shortcut set, and leave the shortcuts sets of smaller parts empty. Now, the dilation of all large parts becomes $D$ while the congestion of small parts can be at most $\sqn$. The congestion on any edge is furthermore at most the number of large parts, of which there can be at most $\sqn$. It is also easy to see that the lower bound topology is exactly designed to either force a dilation or congestion of $\tOm(\sqn)$ leading the best shortcut (for the natural partitioning) to have absolute worst possible quality (for a low diameter network) of $\tilde{\Theta}(\sqn)$. On the other hand \cite{ghaffari2016distributed} and \cite{haeupler2016near} show that any partitioning in planar, bounded or polylogarithmic genus topologies, and bounded or polylogarithmic treewidth or pathwidth topologies has a shortcut of quality $\tO(D)$. All these shortcuts furthermore have the nice property that they are tree restricted, i.e., the union of the shortcut edge sets can be chosen to come from any (low diameter) tree one chooses, such as a BFS tree.

\begin{theorem}
Given a graph $G=(V, E)$  of polylogarithmic genus or treewidth, a valid partition $S_1$, \dots, $S_{N} \subset V$, and a spanning tree $T\subseteq E$ of $G$ of diameter $O(D)$, a \textbf{$T$-restricted} shortcut is a shortcut $\{E_i \subseteq E: i=1,\lds,N\}$ with the additional property that $E_i \subseteq T$. We say that a shortcut is \textbf{tree-restricted} if it is $T$-restricted for some spanning tree $T$ of $G$ of diameter $O(D)$.
\end{theorem}

\smallskip

The final surprising and crucial ingredient for the framework is a simple and efficient distributed algorithm which, for any topology and any partition and any low diameter tree, computes a polylogarithmic approximation to the best possible tree restricted shortcut for this topology and partition~\cite{haeupler2016low}. If the best such shortcut is of quality $Q$, then this algorithm runs in $\tO(Q)$ time. This allows any distributed algorithm to
construct an approximately optimal shortcut on the fly, and then perform a part-wise communication primitive from Lemma~\ref{lem:partwisecom}, while overall not taking longer than the $\tO(Q)$ one should expect to take for such a communication anyway. Differently speaking, algorithms using shortcuts automatically adjust and have a running time related to the best shortcuts possible in a given topology. 

\smallskip

To accurately specify the running time of our algorithms, 
we define the quality $Q_G$ of a network $G$ to be the best possible tree-restricted shortcut quality that can be achieved for any valid partition (and its BFS tree).

\begin{defn}
Given a graph $G$, the shortcut quality $Q_G$ of $G$ is defined as
$$Q_G = \max_{\substack{\mathcal S=\{S_1,\lds,S_n\}\\\text{valid partition}}} \min_{\substack{\text{tree-restricted}\\\text{shortcut }\mathcal E\\ \text{for }\mathcal S}} Q(\mathcal E) $$
\end{defn}

 This allows us to
use the part-wise communication procedure in our algorithms and express the final running times in terms of $Q_G$. We note that taking the worst-case over all partitions is merely for clarity and that our $\tO(Q_G)$ and $O(Q_G n^{\eps})$ algorithms still adjust to the actual input as well, meaning they will run faster on easy inputs even if $Q_G$ is large, certifying that a harder input could have been embedded into the same topology. 

\smallskip

In summary, for any network with topology $G$, the quantity $Q_G$ captures well how hard it is to solve the very simple part-wise communication problem described in Lemma~\ref{lem:partwisecom} in the CONGEST model. Furthermore, there are simple distributed algorithms which solve the part-wise communication problem in $\tO(Q_G)$ rounds. In the worst-case, this hardness, and thus also running time, is $\tilde{\Theta}(\sqn+D)$ for a network with $n$ nodes and diameter $D$~\cite{ghaffari2016distributed}. In particular, the  topology used for the lower bound in \cite{sarma2012distributed} is pathological and has the worst-case hardness among all topologies (with low diameters). In many other networks of interest, however, the hardness $Q_G$ is much lower and in fact only $\tO(D)$. While $\tO(Q_G)$ round shortcut-based algorithms necessarily have a worst-case running time of $\tO(\sqn + D)$ when expressed in terms of $n$ and $D$, they are essentially\footnote{The only reason why this is only essentially and not fully formally true is due to the possible difference between the best tree restricted and best possible shortcut (which does not exist in all network families studied in \cite{haeupler2016near} and \cite{ghaffari2016distributed}) and due to the fact that the necessity of a good shortcut for a fast solution of the part-wise communication problem seems very intuitive but cannot be easily formalized --- see above and also \cite{ghaffari2016distributed} for more details.} running as fast as the given topology (and to some extent even the given input) allows it.

\section{Tree Algorithms}\label{sec:tree}

In this section, we consider algorithms on trees in a distributed setting. Let $G=(V,E)$ be a distributed network with $Q_G$-quality shortcuts, and let $T\subseteq G$ be an embedded tree. Note that the diameter of $G$ can be much smaller than that of $T$, and we want our algorithms to have performance dependent on the diameter of $G$. 
\subsection{Heads/Tails Clustering Algorithm}

In this section, we present the Heads/Tails low-diameter hierarchical clustering algorithm in a distributed setting. This section highlights our first use of shortcuts and provides the intuition behind associating shortcuts with graph contractions.

\begin{defn}
        Let $G=(V,E)$ be a connected graph. A \textbf{low-diameter  hierarchical clustering} of $G$ consists of a sequence of partitions $\mc P_0,\mc P_1,\lds,\mc P_K$ of $V$, with $\mc P_i=\{C_{i,1},C_{i,2},\lds\}$ ($C_{i,j}\subseteq V$), that satisfies the following:
\be
\im Every $C_{i,j}$ is connected.
\im $\mc P_0=\{\{v_1\},\{v_2\},\lds\}$ is the partition into singleton vertices, and $\mc P_K=\{V\}$ partitions $V$ into a single cluster.
\im For each cluster $C_{i,j}$ ($i>1$), there exists a set of clusters $C_{i-1,k_1},C_{i-1,k_2},\lds$ whose (disjoint) union is $C_{i,j}$. Moreover, if we take the graph induced on $C_{i,j}$ and contract clusters $C_{i-1,k_1},C_{i-1,k_2},\lds$, then the resulting graph has diameter $O(1)$.
\ee
\end{defn}

We first describe the Heads/Tails algorithm in the parallel model with graph contractions, and then highlight the technical differences needed in a distributed framework.

In the parallel model, we can view the clusters as the vertices in a \textit{contracted} graph, where every vertex represents the contraction of a cluster of the original graph. We highlight the algorithm in the contraction below, with an example shown in Figure \ref{fig:HTParallel}.
\begin{mdframed}[roundcorner=4pt, backgroundcolor=white]
\textbf{Algorithm {\tt{HeadsTailsParallel}}}

\bi
\im For $i=1,2,\lds$:
\be
\im Every vertex flips either Heads or Tails.
\im For each vertex that flips Tails and which has at least one Heads neighbor, connect the vertex with an arbitrarily chosen Heads neighbor.
\im Contract the connected components.
\im Number the remaining vertices from $1$ to $k$, and output the partition $\mc P_i:=\{C_1,\lds,C_k\}$, where $C_j$ is the set of \textit{original} vertices contracted to vertex $j$.
\im Repeat until there are no more edges.
\ee
\ei
\end{mdframed}

\begin{lemma} \label{HTParallel}
{\tt{HeadsTailsParallel}} terminates in $O(\logn)$ rounds w.h.p., and outputs a low-diameter hierarchical clustering.
\end{lemma}

\begin{figure}
\centering
\begin{tikzpicture}[scale=.9, transform shape]

\begin{scope}
\tikzstyle{every node} = [draw,shape=circle]
\node (1) at (2,4) {$a^H$};
\node (2) at (1,2) {$b^T$};
\node (3) at (0,0) {$c^T$};
\node (4) at (1,0) {$d^H$};
\node (5) at (2,0) {$e^H$};
\node (6) at (3,2) {$f^T$};
\tikzstyle{every node} = [];
\foreach \from/\to in {1/2,2/3,2/5}
  \draw [] (\from) -- (\to);
\foreach \from/\to in {2/4,6/1}
  \draw [line width=2pt, ->] (\from) -- (\to);
\end{scope}

\node at (4.5,2) {\Huge$\Ra$};

\begin{scope}[shift={(5,0)}]
\tikzstyle{every node} = [draw,shape=circle]
\node (1) at (2,4) {$af^T$};
\node (2) at (1,2) {$bd^H$};
\node (3) at (0,0) {$c^H$};
\node (5) at (2,0) {$e^T$};
\foreach \from/\to in {2/3}
  \draw [] (\from) -- (\to);
\foreach \from/\to in {1/2,5/2}
  \draw [line width=2pt, ->] (\from) -- (\to);
\end{scope}

\node at (8.5,2) {\Huge$\Ra$};

\begin{scope}[shift={(9.5,0)}]
\tikzstyle{every node} = [draw,shape=circle]
\node (1) at (1,2) {$abdef^T$};
\node (3) at (0,0) {$c^H$};
\foreach \from/\to in {1/3}
  \draw [line width=2pt, ->] (\from) -- (\to);
\end{scope}

\node at (12.5,2) {\Huge$\Ra$};

\begin{scope}[shift={(12.5,0)}]
\tikzstyle{every node} = [draw,shape=circle]
\node (1) at (2,2) {$abcdef$};
\end{scope}
\end{tikzpicture}
\caption{{\tt HeadsTailsParallel} on a sample graph with original vertices $a$ through $f$.}\label{fig:HTParallel}
\end{figure}
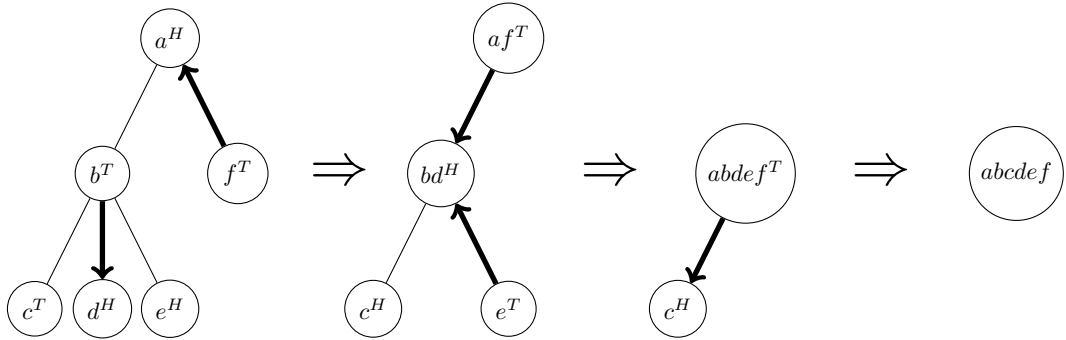

\begin{proof}
Suppose that the algorithm runs for $K:=\Th(\logn)$ rounds, disregarding the stopping condition. As long as there is more than one vertex remaining,  every vertex has at least one neighbor. The probability that a given vertex flips Tails and a specific neighbor flips Heads is $\f14$, so every vertex merges with a neighbor with probability at least $\f14$. Therefore, if $n_i$ is the number of vertices left on iteration $i$ ($n_i\ge1$), then $\E[n_{i+1}|n_i=k]-1\le\f34(k-1)$, and applying induction gives $\E[n_K]-1 \le (\f34)^K(n-1)=\f1{\poly(n)}$. By Markov's inequality, the probability that $n_K\ge2$ is $\f1{\poly(n)}$, so w.h.p. the stopping condition activates within $K$ rounds.

It is easy to see that (1) every cluster output by the algorithm is connected, and (2) $P_0$ and $P_K$ are the singleton partition and the single-cluster partition.

For a cluster $C_{i,j}$, consider the set of vertices $v_{k_1},v_{k_2},\lds$ at the beginning of iteration $i$ which contract to vertex $j$ on that iteration. There is at most one Heads vertex, and every Tails is adjacent to the Heads, so together, these vertices form a star component of diameter at most 2. Therefore, the clusters $C_{i-1,k_1},C_{i-1,k_2},\lds$ partition $C_{i,j}$ and, when contracted in $C_{i,j}$, form a graph of diameter $O(1)$.
\end{proof}

Now we present the Heads/Tails algorithm in the distributed setting. Observe that, since we can no longer contract connected components into single vertices, we instead construct shortcuts to allow for efficient communication between components.

\begin{mdframed}[roundcorner=4pt, backgroundcolor=white]
\textbf{Algorithm {\tt{HeadsTailsDistributed}}}
\be
\im Compute $\tilde{O}(Q_G)$-quality shortcuts using each cluster as a part.
\im For each cluster, choose a ``leader'' as follows: every vertex within the cluster computes the minimum vertex ID over that cluster using shortcuts. The vertex with this ID is the leader.
\im For each cluster, the leader flips either Heads or Tails and broadcasts this bit to the entire cluster.
\im If an edge $(h,t)$ in between two clusters has $h$ in a Heads cluster and $t$ in a Tails, then send the message $(h,t)$ to vertex $t$.
\im For every Tails cluster, every vertex computes the minimum (lexicographic) message received, if any, over that cluster.
\im If a vertex $t$ in a Tails cluster $C$ receives message $(h,t)$, then broadcast to $C$ that $C$ has merged with the cluster containing $h$ (i.e. they are now the same part).
\im Every cluster computes its size and stops if it equals the size of $T$. Otherwise, repeat from the start.
\ee
\end{mdframed}

\begin{lemma}
{\tt{HeadsTailsParallel}} terminates in $\tilde O(Q_G)$ rounds w.h.p., and outputs a low-diameter hierarchical clustering.\end{lemma}

\begin{proof}
We compare {\tt{HeadsTailsDistributed}} to {\tt{HeadsTailsParallel}}. Steps (2) and (3) emulate a contracted vertex flipping a bit. Steps (4) through (6) correspond to choosing an arbitrary Heads neighbor for each Tails cluster. Finally, Step (7) checks to see if only one cluster remains. Note that the sizes can be computed efficiently by aggregating $\Sum_{v\in C}1$ within each cluster. It is also clear that in each iteration, either no cluster terminates, or all of them do.  As for running time, every iteration has $O(1)$ aggregates and $O(1)$ broadcasts, each running in $\tilde{O}(Q_G)$ time, so over all $O(\logn)$ iterations w.h.p., the clustering algorithm takes $\tilde{O}(Q_G)$ time.
\end{proof}

Observe that the algorithm is careful to ensure that two vertices do not broadcast within the same cluster simultaneously. In the future, we will skip the more pedantic steps in aggregating and broadcasting.

\subsection{Aggregate Functions}
As before, let $G=(V,E)$ be a distributed network and let $T\subseteq G$ be a tree. Consider rooting the tree at a predetermined root $r$, and suppose every vertex $v$ in the tree has a value $x_v$. For a commutative function $\oplus$, we want to compute, for every $v\in V[T]$, the value $\displaystyle\bigoplus\limits_{u\in S_v} x_u$ for two types of vertex sets $S_v\subseteq V[T]$: the set of vertices on the path from $v$ to the root, and the set of vertices in the subtree rooted at $v$. Using shortcuts of quality $Q_G$, we show how to compute these two aggregates in time $\tilde{O}(Q_G)$. At a high level, our algorithm mimics the divide-and-conquer algorithms in parallel computing, with the clusters representing the divided inputs. Figure \ref{fig:AggregateSubtree} illustrates {\tt AggregateSubtree} on a sample graph.

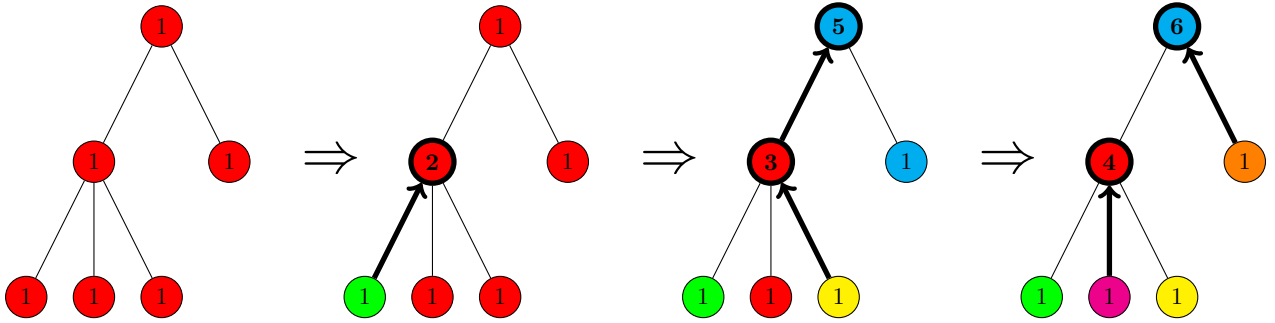
\begin{figure}
\centering
\begin{tikzpicture}[scale=.9, transform shape]

\begin{scope}
\tikzstyle{every node} = [draw,shape=circle]
\node[fill=red] (1) at (2,4) {1};
\node[fill=red] (2) at (1,2) {1};
\node[fill=red] (3) at (0,0) {1};
\node[fill=red] (4) at (1,0) {1};
\node[fill=red] (5) at (2,0) {1};
\node[fill=red] (6) at (3,2) {1};
\foreach \from/\to in {1/2,2/3,2/5,2/4,1/6}
  \draw [] (\from) -- (\to);
\end{scope}

\node at (4.5,2) {\Huge$\Ra$}; 

\begin{scope}[shift={(5,0)}]
\tikzstyle{every node} = [draw,shape=circle]
\node[fill=red] (1) at (2,4) {1};
\node[fill=red,line width=2pt] (2) at (1,2) {\textbf 2};
\node[fill=green] (3) at (0,0) {1};
\node[fill=red] (4) at (1,0) {1};
\node[fill=red] (5) at (2,0) {1};
\node[fill=red] (6) at (3,2) {1};
\foreach \from/\to in {1/2,2/3,2/5,2/4,1/6}
  \draw [] (\from) -- (\to);
\foreach \from/\to in {3/2}
  \draw [line width=2pt,->] (\from) -- (\to);
\end{scope}

\node at (9.5,2) {\Huge$\Ra$};

\begin{scope}[shift={(10,0)}]
\tikzstyle{every node} = [draw,shape=circle]
\node[fill=cyan,line width=2pt] (1) at (2,4) {\textbf 5};
\node[fill=red,line width=2pt] (2) at (1,2) {\textbf 3};
\node[fill=green] (3) at (0,0) {1};
\node[fill=red] (4) at (1,0) {1};
\node[fill=yellow] (5) at (2,0) {1};
\node[fill=cyan] (6) at (3,2) {1};
\foreach \from/\to in {1/2,2/3,2/5,2/4,1/6}
  \draw [] (\from) -- (\to);
\foreach \from/\to in {2/1,5/2}
  \draw [line width=2pt,->] (\from) -- (\to);
\end{scope}

\node at (14.5,2) {\Huge$\Ra$};

\begin{scope}[shift={(15,0)}]
\tikzstyle{every node} = [draw,shape=circle]
\node[fill=cyan,line width=2pt] (1) at (2,4) {\textbf 6};
\node[fill=red,line width=2pt] (2) at (1,2) {\textbf 4};
\node[fill=green] (3) at (0,0) {1};
\node[fill=magenta] (4) at (1,0) {1};
\node[fill=yellow] (5) at (2,0) {1};
\node[fill=orange] (6) at (3,2) {1};
\foreach \from/\to in {1/2,2/3,2/5,2/4,1/6}
  \draw [] (\from) -- (\to);
\foreach \from/\to in {4/2,6/1}
  \draw [line width=2pt,->] (\from) -- (\to);
\end{scope}

\end{tikzpicture}

\caption{{\tt AggregateSubtree} on a sample graph with initial $x_v=1$, using the partitions from Figure \ref{fig:HTParallel}. The node labels represent the current value of $x_v$ in the algorithm. The color classes represent the current parts. The arrows indicate the updates in step 4 of the algorithm.}\label{fig:AggregateSubtree}
\end{figure}

\begin{mdframed}[roundcorner=4pt, backgroundcolor=white]
\textbf{Algorithm {\tt{AggregateTree}}}
\be
\im Let $\mc P_0,\lds,\mc P_\el$ be the partitions output by the Heads/Tails algorithm, where $\mc P_0$ is the singleton vertices and $\mc P_\el$ is one entire component. Set $r$ to be the root of $\mc P_\el$.
\im Initially, each $v\in V[T]$ has value $x_v$. \ee

        \im \textbf{Sub-algorithm {\tt AggregateSubtree}}:
\bi \im For $t$ from $\el-1$ down to $0$:
\bi \im For each cluster $C\in \mc P_{t+1}$:
 \be
 \im Let the cluster in $\mc P_{t}$ that are contained in $C$ be $\{
C_1,\lds,C_k\}$.
 \im For each cluster $C_i$, compute the aggregate $F(C_i):=\displaystyle\bigoplus\limits_{u\in C_i} x_u$.
 \im Let the root of $C$ be $r(C)$, and assume that $C_1$ contains $r(C)$. Set $r(C_1):=r(C)$. Within cluster $C$, view the $C_i$ as contracted vertices in a tree, and run BFS from $C_1$. For each $C_i$, $i>1$, set $r(C_i)$ to be the vertex in $C_i$ adjacent to the parent of $C_i$ in the BFS tree.
 \im For every edge $(u,v)$ in between two parts where the BFS travels from $u$ to $v$, set $x_v := x_v \oplus F(C_i)$.
 \ee
\ei \ei
        \im \textbf{Sub-algorithm {\tt AggregatePathToRoot}}:
\bi \im For $t$ from $0$ up to $\el-1$:
\bi \im for each part $P\in C_{t+1}$:
 \be
 \im Let the root of $C$ be $r(C)$, and assume that $C_1$ contains $r(C)$. Set $r(C_1):=r(C)$. Within cluster $C$, view the $C_i$ as contracted vertices in a tree, and run BFS from $C_1$. For each $C_i$, $i>1$, set $r(C_i)$ to be the vertex in $C_i$ adjacent to the parent of $C_i$ in the BFS tree.
 \im For every edge $(u,v)$ in between two parts where the BFS travels from $u$ to $v$, send the value $x_u$ to $v$.
 \im For each part except the root, the one vertex that receives some value $x$ broadcasts it to its entire part. Then, every vertex in the part applies $x_v := x_v \oplus x$.
 \ee
\ei \ei
\end{mdframed}

It is easy to see, by induction on $t$, that the two algorithms are correct. The running time is clearly $\tilde{O}(Q_G)$.

\section{Low-diameter Decompositions}\label{sec:ldd}

In this section, we present the exponential starting time algorithm by Miller et al. \cite{miller2013parallel}, with a few modifications:
 \be
 \im We work on weighted graphs with weights in the range $[1,\poly(n)]$.
 \im We work under a distributed setting, where the graph is the network.
 \im We replace exponential random variables with their discrete cousin, the geometric. The latter does not have to deal with rounding real numbers, and yet maintains the special \textit{memoryless} property crucial to the algorithm.
 \ee
 Instead of referring to \cite{miller2013parallel}, we prove all the properties of the LDD that we need, since the proofs are simple and allow this paper to become self-contained.

\begin{mdframed}[roundcorner=4pt, backgroundcolor=white]

\textbf{Algorithm {\tt{LowDiameterDecomposition}}}

Input:
 \bi
 \im Weighted graph $G$ with edge weights in the range $[1,\poly(n)]$.
 \im Parameter $\bt$, freely chosen.
 \ei

Output: A partition of $V$ into connected components such that:
 \bi
 \im W.h.p., each component has (strong) diameter $O(\f1\bt\logn)$.
 \im For vertices $u,v\in V$ of (weighted) distance $d$, the probability that $u$ and $v$ belong to the same cluster is $e^{-O(d\bt)}$.
 \ei

Algorithm:
 \be
 \im Every vertex $u$ picks $\de_u$ independently from a geometric distribution with parameter $\bt$.
 \im For each vertex $u$, set its starting time to $t_u:=\f c\bt\logn-\de_u$, where $c$ is a predetermined, large-enough constant. (W.h.p., no $\de_u$ will exceed $c\logn$.)
 \im Simulate a continuous-time, parallel BFS with vertex $u$ starting at time $t_u$ for $\f c\bt\logn$ rounds.
 \im Whenever the BFS reaches vertex $u$, assign $u$ to the root vertex of this BFS path. The resulting components are the sets of vertices assigned to a common root.
 \ee
\end{mdframed}

\begin{figure}
\centering
\begin{tikzpicture}[scale=.9, transform shape]

\begin{scope}
\tikzstyle{every node} = [draw,shape=circle]
\node[] (1) at (0,2) {1};
\node[] (2) at (2,0) {1};
\node[] (3) at (4,1) {4};
\node[] (4) at (4,3) {2};
\node[fill=red,line width=2pt] (5) at (1,4) {0};
\node[] (6) at (3,5) {2};
\tikzstyle{every node} = []
\foreach \from/\to/\len in {1/2/0.1,2/3/0.9,4/5/1,5/6/4}
  \draw [] (\from) -- (\to) node[pos=.5,above]{\len};
\foreach \from/\to/\len in {3/4/1,2/5/1.1}
  \draw [] (\from) -- (\to) node[pos=.5,right]{\len};
\foreach \from/\to/\len in {5/1/0.9}
  \draw [] (\from) -- (\to) node[pos=.5,left]{\len};
\foreach \from/\to in {}
  \draw [line width=2pt, ->] (\from) -- (\to);
\node at (2,-1) {\huge $t=0$};
\end{scope}

\begin{scope}[shift={(6,0)}]
\tikzstyle{every node} = [draw,shape=circle]
\node[fill=red,line width=2pt] (1) at (0,2) {1};
\node[fill=green,line width=2pt] (2) at (2,0) {1};
\node[] (3) at (4,1) {4};
\node[fill=red,line width=2pt] (4) at (4,3) {2};
\node[fill=red] (5) at (1,4) {0};
\node[] (6) at (3,5) {2};
\tikzstyle{every node} = []
\foreach \from/\to/\len in {1/2/0.1,2/3/0.9,4/5/1,5/6/4}
  \draw [] (\from) -- (\to) node[pos=.5,above]{\len};
\foreach \from/\to/\len in {3/4/1,2/5/1.1}
  \draw [] (\from) -- (\to) node[pos=.5,right]{\len};
\foreach \from/\to/\len in {5/1/0.9}
  \draw [] (\from) -- (\to) node[pos=.5,left]{\len};
\foreach \from/\to in {5/1,5/4,1/2}
  \draw [line width=2pt, ->] (\from) -- (\to);
\node at (2,-1) {\huge $t=1$};
\end{scope}

\begin{scope}[shift={(12,0)}]
\tikzstyle{every node} = [draw,shape=circle]
\node[fill=red] (1) at (0,2) {1};
\node[fill=green] (2) at (2,0) {1};
\node[fill=green,line width=2pt] (3) at (4,1) {4};
\node[fill=red] (4) at (4,3) {2};
\node[fill=red] (5) at (1,4) {0};
\node[fill=cyan,line width=2pt] (6) at (3,5) {2};
\tikzstyle{every node} = []
\foreach \from/\to/\len in {1/2/0.1,2/3/0.9,4/5/1,5/6/4}
  \draw [] (\from) -- (\to) node[pos=.5,above]{\len};
\foreach \from/\to/\len in {3/4/1,2/5/1.1}
  \draw [] (\from) -- (\to) node[pos=.5,right]{\len};
\foreach \from/\to/\len in {5/1/0.9}
  \draw [] (\from) -- (\to) node[pos=.5,left]{\len};
\foreach \from/\to in {2/3}
  \draw [line width=2pt, ->] (\from) -- (\to);
\node at (2,-1) {\huge $t=2$};
\end{scope}

\end{tikzpicture}

\caption{{\tt LowDiameterDecomposition} on a sample graph. The node labels are the starting times $t_u$. The color classes represent the current components at time $t$. Note that there is a tie on the green node at $t=1$.}\label{fig:LDD}
\end{figure}
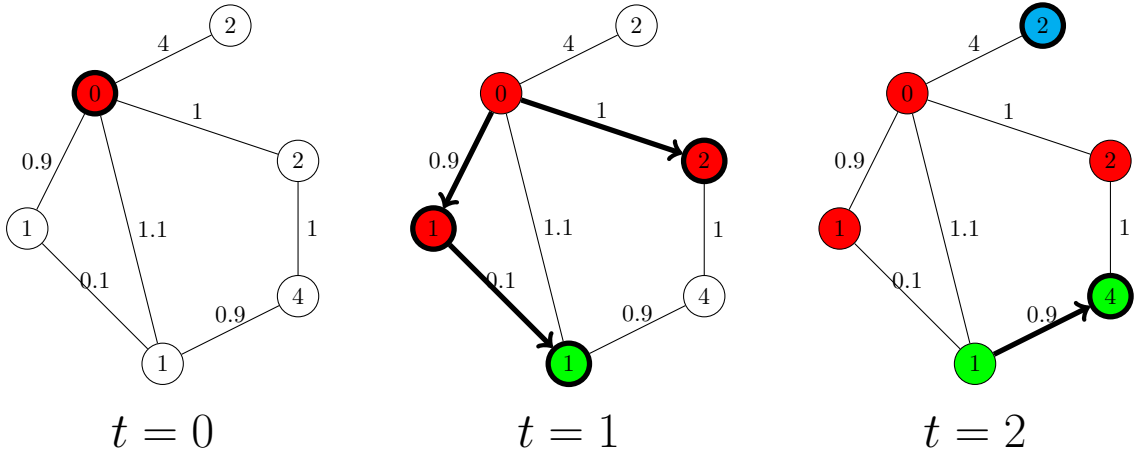

The algorithm is illustrated on a sample graph in Figure \ref{fig:LDD}. First, we need to describe how to simulate a continuous-time BFS on a distributed network with edge weights at least $1$. Throughout the simulation, we maintain the invariant that a node that is reached at time $t$ will receive the message ``$t$'' at time $\lf t\rf$. When this happens, the node sends along each incident edge $e$ the message ``$t+w_e$'' at round $\lf t+w_e\rf$. Since $w_e\ge1$, $\lf t+w_e\rf > \lf t\rf$, so the round in which to send the message is always in the future. Note that a node may receive multiple messages ``$t_i$'' on a single round, in which case it is occupied at the earliest time $t_i$.

\LDDDiam*

\begin{proof}
The probability that some $t_u=\f c\bt\logn-\de_u$ is negative is the probability that  $\text{Geo}(\bt)>\f c\bt\logn$, which is $(1-\bt)^{(c/\bt)\logn}\le n^{-c}$. Otherwise, since every vertex is assigned a nonnegative time bounded by $\f c\bt\logn$, and the BFS trees out of each root have diameter at most $\f c\bt\logn$, it follows that each component also has diameter at most $\f c\bt\logn$.
\end{proof}

\LDDProb*

\begin{proof}
For each root $r$, consider the random variable $X_{r,u}:=t_r+d(r,u)$, which is the time when the BFS at $r$ would reach $u$ if no other BFS paths interfere. We claim that if the difference between the smallest and second-smallest $X_{r,u}$ (call them $X_{r,u}^{(1)}$ and $X_{r,u}^{(2)})$ over all roots $r$ are more than $2d$ apart, then $u$ and $v$ belong in the same component. Suppose for contradiction that $u$ and $v$ belong to different components. Then, the times when the BFS reaches $u$ and $v$ must be within $d$ of each other, since otherwise, the first BFS tree to each one of $u$ and $v$ would also reach the other vertex first. The BFS tree that reaches $v$ first takes at most another $d$ time to reach $u$, which means that $X_{r,u}^{(2)}-X_{r,u}^{(1)}\le2d$, contradiction.

To bound the probability that $X_{r,u}^{(2)}-X_{r,u}^{(1)}>2d$, consider the variable $X_{r,u}^{(1)}$ conditioned on the value of $X_{r,u}^{(2)}$. If $r$ is the root that produces $X_{r,u}^{(1)}$, then we are looking for the value of $X_{r,u}^{(2)} - (t_r+d(r,u))=\de_r-(\f c\bt\logn-X_{r,u}^{(2)}+d(r,u))$ conditioned on the value being nonnegative. This is the same value as $\de_r-C$ conditioned on the event $\de_r\ge \lc C\rc$, for the (possibly negative) $C:=c\bt\logn-X_{r,u}^{(2)}+d(r,u)$. We can focus on the probability that $\de_r-\lc C\rc \ge \lc 2d\rc$, since this implies that $\de_r-C>2d$. By the memoryless property of geometric random variables, the value of $\de_r-\lc C\rc$ conditioned on $\de_r\ge \lc C\rc$ follows the distribution $\text{Geo}(\bt)+\max\{0,-\lc C\rc\}$, and the desired probability is at least $(1-\bt)^{\lc2d\rc}\ge e^{-\bt\cd O(d)}$.
\end{proof}

\end{document}